\DeclarePairedDelimiter\floor{\lfloor}{\rfloor}
\newtheorem{assumption}{Assumption}
\newtheorem{theorem}{Theorem}
\newtheorem{definition}{Definition}
\newtheorem{proposition}{Proposition}
\newtheorem{corollary}{Corollary}
\DeclareMathOperator*{\argmax}{arg\,max}
\DeclareMathOperator*{\argmin}{arg\,min}
\let\emptyset\varnothing
\newcolumntype{Y}{>{\centering\arraybackslash}X}
\def\BibTeX{{\rm B\kern-.05em{\sc i\kern-.025em b}\kern-.08em
     T\kern-.1667em\lower.7ex\hbox{E}\kern-.125emX}}
\begin{document}
\title{Device Sampling and Resource Optimization for Federated Learning in Cooperative Edge Networks}  

\author{\IEEEauthorblockN{\normalsize Su Wang\IEEEauthorrefmark{1}, Roberto Morabito\IEEEauthorrefmark{3}, Seyyedali Hosseinalipour\IEEEauthorrefmark{4}, Mung Chiang\IEEEauthorrefmark{1}, and Christopher G. Brinton\IEEEauthorrefmark{1}\\}
\IEEEauthorblockA{\small \IEEEauthorrefmark{1}School of Electrical and Computer Engineering, Purdue University\\
\IEEEauthorrefmark{3}Communication Systems Department, EURECOM\\
\IEEEauthorrefmark{4}Department of Electrical Engineering, University at Buffalo--SUNY\\
Email: \IEEEauthorrefmark{1}\{wang2506, chiang, cgb\}@purdue.edu,
\IEEEauthorrefmark{3}roberto.morabito@eurecom.fr,
\IEEEauthorrefmark{4}alipour@buffalo.edu}
\vspace{-0.2in}
\thanks{{This work is an extension of our prior work titled ``Device sampling for heterogeneous federated learning: Theory, algorithms, and implementation"~\cite{wang2021device} published in the Proceedings of IEEE INFOCOM, 2021.} }
\thanks{{Christopher G. Brinton is supported by the National Science Foundation (NSF) under grants CNS-2146171 and CPS-2313109, and the Office of Naval Research (ONR) under grants N000142212305 and N0001423C1016.}}}

\maketitle
\begin{abstract}
The conventional federated learning (FedL) architecture distributes machine learning (ML) across worker devices by having them train local models that are periodically aggregated by a server. FedL ignores two important characteristics of contemporary wireless networks, however: (i) the network may contain heterogeneous communication/computation resources, and (ii) there may be significant overlaps in devices' local data distributions. In this work, we develop a novel optimization methodology that jointly accounts for these factors via intelligent device sampling complemented by device-to-device (D2D) offloading. 
Our optimization methodology aims to select the best combination of sampled nodes and data offloading configuration to maximize FedL training accuracy while minimizing data processing and D2D communication resource consumption subject to realistic constraints on the network topology and device capabilities. 
Theoretical analysis of the D2D offloading subproblem leads to new FedL convergence bounds and an efficient sequential convex optimizer. Using these results, we develop a sampling methodology based on graph convolutional networks (GCNs) which learns the relationship between network attributes, sampled nodes, and D2D data offloading to maximize FedL accuracy. 
Through evaluation on popular datasets and real-world network measurements from our edge testbed, we find that our methodology outperforms popular device sampling methodologies from literature in terms of ML model performance, data processing overhead, and energy consumption. 

\end{abstract}
\section{Introduction}

\noindent  
The proliferation of smartphones, unmanned aerial vehicles (UAVs), and other devices comprising the Internet of Things (IoT) is causing an exponential rise in data generation and large demands for machine learning (ML) at the edge~\cite{8373692,alsheikh2022five}. 
For example, sensor and camera modules on self-driving cars produce up to 1.4 terabytes of data per hour~\cite{carData1} which enable the training of ML models for intelligent navigation and autonomous driving~\cite{wu2022intelligence}. 
Unfortunately, traditional ML techniques where model training is conducted at a centralized server are not applicable to such distributed environments as the data often cannot be transferred to a server. 
There are two challenges in particular: (i) given the large volumes of data at the network edge, data transfer to a server imposes long delays and overloads the network infrastructure~\cite{li2022unified}, and (ii) in certain applications (such as healthcare), some users are unwilling to share their data owing to privacy concerns~\cite{gupta2023fedcare}. 

Federated learning (FedL) has become a popular distributed ML technique aiming to overcome these challenges~\cite{mcmahan2017communication,konevcny2016federated}. Under FedL, devices train models on their local datasets, typically by means of gradient descent, and a server periodically aggregates the parameters of local models to form a global model. These global model parameters are then transferred back to the devices for the next round of local updates, as depicted in Fig.~\ref{fig:simpleFL}.
In standard FedL methodologies, each device processes its own collected data and operates independently within an aggregation period~\cite{wang2021novel,lu2022class}. This will, however, become problematic in terms of upstream device communication and local device processing requirements as implementations scale to networks consisting of millions of heterogeneous wireless devices~\cite{niknam2020federated,hosseinalipour2020federated,zehtabi2022decentralized}. 

At the same time, device-to-device (D2D) communications in cooperative 5G/IoT edge networks can enable \textit{consensual} local offloading of data processing from resource hungry to resource rich devices~\cite{chang2023bev,wang2023towards}. 
Additionally, we can expect that for particular applications, the datasets collected across devices will contain varying degrees of similarity, e.g.,
images gathered by UAVs surveying the same area~\cite{9084352,wang2022uav}. Processing similar data distributions at multiple devices adds unnecessary computational and communication overheads to FedL, thus leading to an opportunity for efficiency improvement. 

\begin{figure}[t]
\includegraphics[width=.45\textwidth]{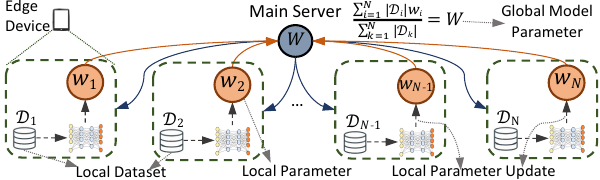}
\centering
\caption{Architecture of conventional federated learning (FedL).}
\label{fig:simpleFL}
\vspace{-6mm}
\end{figure}

Motivated by this, we develop a novel methodology for \textit{smart device sampling with data offloading} in FedL. Specifically, we formulate a joint sampling and data offloading optimization problem where devices expected to maximize contribution to model training are sampled for training participation, while devices that are not selected may transfer data to devices that are. 
This data offloading is motivated by paradigms such as \textit{fog learning}~\cite{hosseinalipour2020federated,hosseinalipour2023parallel}, in which data offloading only occurs among trusted devices; devices that have privacy concerns are exempt from data offloading. 
Furthermore, the data offloading is performed according to estimated data dissimilarities between nodes, which are updated as transfers occur. 
We show that our methodology yields comparable performance to conventional FedL, and superior performance relative to baselines from literature while significantly reducing network resource utilization.

\subsection{Related Work}
\textbf{Resource efficiency in federated learning}
Since conventional FedL in large-scale networks can incur heavy communication and computation resource use, many works have investigated avenues to reduce the resource burden of edge devices in FedL. In particular, a few popular and recent approaches have focused on efficient encoding designs to reduce parameter transmission sizes~\cite{9155479,sattler2019robust}, optimizing the frequency of global aggregations~\cite{sun2020adaptive,wang2019adaptive,wu2023joint}, and device sampling \cite{ji2020dynamic,karimireddy2020scaffold,pmlr-v151-jee-cho22a}. 
Our work falls into the device sampling category, in which methodologies rely on intelligent selection of devices for the global aggregation stage of FedL to improve resource efficiency. 
In this regard, most works have assumed a static or uniform device selection strategy, e.g.,~\cite{wang2019adaptive,yang2019energy,tran2019federated,konevcny2016federated,sahu2018convergence,reisizadeh2020fedpaq,ji2020dynamic}, 
where the main server chooses a subset of devices either uniformly at random or according to a pre-determined sampling distribution. 

While there are many different approaches for device sampling for FedL such as sampling based on wireless channel characteristics in cellular networks (e.g.,~\cite{8851249,shi2019device,xia2020multi,ren2020scheduling}), we focus on developing a sampling technique that adapts to the heterogeneity of device resources and the similarities/dissimilarities of data distributions across large-scale contemporary wireless edge networks. 
When investigating literature in this line of device sampling for FedL, popular works have attempted device sampling based on each device's instantaneous contributions to global updates~\cite{9155494}, data proportional sampling based on devices' local dataset sizes~\cite{li2019convergence,karimireddy2020scaffold}, and sampling devices based on local ML training loss~\cite{pmlr-v151-jee-cho22a}.
Specifically, when compared to the literature on device sampling by each device's instantaneous contributions to the global updates~\cite{9155494,pmlr-v151-jee-cho22a}, our proposed methodology introduces a novel perspective based on device data similarities, and furthermore exploits the proliferation of D2D cooperation at the wireless edge~\cite{su2021secure,wang2023towards,wang2021network,lin2021friend} to diversify each selected device's local data via D2D offloading. Our work thus considers the problems of sampling and D2D offloading for FedL jointly, and leads to new analytical convergence bounds and algorithms used by implementations.


\textbf{Towards further cooperation in federated learning.}
Most works in advancing conventional FedL~\cite{tran2019federated,konevcny2016federated,dou2023device,zhang2023semi,qu2021efficient} limit network cooperation to the global aggregation stages. 
While this is effective in situations requiring absolute data privacy, many practical large-scale and mobile edge networks are not limited to such an extent~\cite{xie2020fast,xia2021ol,wang2023towards}. 
Especially in large-scale and mobile edge networks, devices can be heterogeneous with respect to their data privacy demands~\cite{li2023energy,savazzi2021opportunities,lin2022relay}. 
While some devices may require absolute data privacy, many devices may be willing to share non-sensitive data or share all of their data to a device that they trust. 
Such is the case in recent literature on connected autonomous driving~\cite{barbieri2022decentralized,chang2023bev}, which rely on inter-device cooperation for normal functionality and is now attempting to integrate the benefits of modern ML. 
With this in mind, works such as~\cite{li2023energy,ganguly2023multi,guo2022hybrid} have begun expanding the scope of cooperation in FedL. Specifically,~\cite{li2023energy} investigates FedL performance in the presence of D2D model offloading among edge devices,~\cite{ganguly2023multi} proposes device to edge server data offloading so that edge servers can contribute to FedL training, and~\cite{guo2022hybrid} leverages D2D cooperation in edge networks to develop a modified version of SGD for FedL specifically. 
Our work contributes to existing literature in cooperative FedL by investigating how D2D data cooperation can lead to optimized device sampling with respect to ML training efficacy and total network resource consumption. Furthermore, our work assumes that devices are heterogeneous with respect to their privacy demands (i.e., they trust some edge devices but not others), resulting in some D2D links enabled and others disabled. 

\begin{figure}[t]
\includegraphics[width=.48\textwidth]{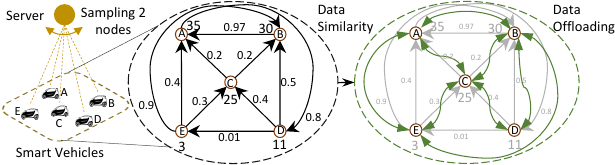}
\centering
\caption{A motivating example of a wireless network composed of 5 connected vehicles and an edge server. The server can only sample two vehicles to participate in FedL training.}
\label{fig:SimilarOffload}
\vspace{-5mm}
\end{figure}
\vspace{-1mm}
\subsection{Motivating Toy Example}\label{sec:toy}
Consider Fig.~\ref{fig:SimilarOffload}, wherein five heterogeneous connected vehicles communicate with an edge server to train an object detection model. Due to limited bandwidth, the server can only exploit $2$ out of the $5$ vehicles to conduct FedL training, but needs to train a model representative of the entire dataset within this network. 
The computational capability of each vehicle, i.e., the number of processed datapoints in one aggregation period, is shown next to the node representing the vehicle, and the edge weights in the data similarity graph capture the similarity between the local data of the vehicles. 
Rather than using statistical distance metrics~\cite{liese2006divergences}, which are hard to compute in this distributed scenario, the data similarities could be estimated by commute routes and geographical proximity~\cite{jeske2013floating}. Also, in D2D-enabled environments, nodes can exchange small data samples with trusted neighbors to calculate similarities locally and report them to the server.

In Fig.~\ref{fig:SimilarOffload}, if the server samples the vehicles with the highest computational capabilities, i.e., $A$ and $B$, the sampling is inefficient due to the high data similarity between them. Conversely, if it samples those with the lowest similarity, i.e., $D$ and $E$, the local models will be based on low computational capabilities, which will often result in a low accuracy (and could be catastrophic in this vehicular scenario). Optimal sampling of the vehicles considering both data similarities and computational capabilities is thus critical to the operation of FedL.


We take this one step further to consider how D2D offloading can lead to augmented local distributions of sampled vehicles.
The node sampling must consider the neighborhoods of different vehicles and the capability of data offloading in those neighborhoods. 
For example, D2D is cheaper in terms of resource utilization among vehicles that are close in proximity. 
The feasible offloading topology shown in Fig.~\ref{fig:SimilarOffload} is represented by the data offloading graph. Given $C$'s high processing capability and data dissimilarity with neighboring vehicles $E$ and $D$, sampling $C$ in a D2D-optimized solution combined with data offloading from $E$ and $D$ to $C$ can yield a composite of all three vehicles' distributions. The purpose of this paper is to model these relationships for general edge/fog networks and optimize the resulting sampling and offloading configurations.

\vspace{-1.1mm}
\subsection{Outline and Summary of Contributions}

\begin{itemize}[leftmargin=4mm]

\item We formulate the joint sampling, D2D offloading, and communication and computation resource optimization problem for maximizing FedL model accuracy subject to realistic network constraints (Sec.~\ref{s:sm}). 

\item We develop an approach to D2D data similarity estimation by comparing the dataset centroids across network devices (Sec.~\ref{s:sm}). This method enables the server to perform the optimization without harvesting actual device data.

\item Our theoretical analysis of the offloading subproblem for a fixed sampling strategy yields a new upper bound on the convergence of FedL under an arbitrary data sampling strategy (Sec.~\ref{s:p1}). Using this bound, we derive an efficient sequential convex optimizer for the offloading strategy.

\item We propose a novel ML-based methodology that learns the desired combination of sampling and resulting offloading (Sec.~\ref{s:p2}). We encapsulate the network structure and offloading scheme into model features and learn a mapping to the sampling strategy that maximizes expected FedL accuracy.

\item We evaluate our methodology through experiments with network parameters obtained from our testbed of wireless IoT devices (Sec.~\ref{s:numRes}). 
Our methodology yields better ML accuracies and less resource consumption than recent device sampling baselines from literature. 

\item We experimentally characterize our formulation by investigating its behavior with respect to various key optimization variables while varying network and sampling device set sizes (Sec.~\ref{s:numRes}).

\end{itemize}

\vspace{-1mm}
\section{System and Optimization Model} \label{s:sm}



\noindent In this section, we formulate the joint sampling and offloading optimization (Sec.~\ref{ss:ovr_problem}). We first introduce our edge device (Sec.~\ref{sss:devices}), network topology and data similarity (Sec.~\ref{sss:graph}), and ML (Sec.~\ref{ss:mlp}) system models.


\subsection{Edge Device Model} \label{sss:devices}
We consider a set of devices $\mathcal{N} = \{1,\cdots,N\}$ connected to a server, and time span $t=0,\cdots,T$ for model training. Each device $i \in \mathcal{N}$ possesses a data processing capacity $P_i(t) \geq 0$, which limits the number of datapoints it can process for training at time $t$, and a unit data processing cost $p_i(t) \geq 0$. Intuitively, $p_i(t)$, $P_i(t)$ are related to the total CPU cycles, memory (RAM), and power available at device $i$~\cite{morabito2018legiot}. 
These factors are heterogeneous and time-varying, e.g., as battery power fluctuates and as each device becomes occupied with other tasks. Additionally, for each $i \in \mathcal{N}$, we define $\Psi_{i}(t) > 0$ as the data transmit budget, and $\psi_{i,j}(t) > 0$ as the unit data transmission cost from device $i$ to device $j$. 
Intuitively, $\psi_{i,j}(t)$, $\Psi_{i}(t)$ are dependent on factors such as the available bandwidth and distance. For example, devices that are closer in proximity would be expected to have lower $\psi_{i,j}(t)$.
Our specific derivations of these data processing and communication variables are based on empirical measurements from our IoT testbed, which we describe in detail in Sec.~\ref{ss:setup}.

Due to resource constraints, the server selects a set $\mathcal{S} \subseteq \mathcal{N}$ of devices to participate in FedL training. Some devices $i \in \hat{\mathcal{S}} \triangleq \mathcal{N} \setminus \mathcal{S}$ may be stragglers, i.e., possessing insufficient $P_i(t)$ to participate in training, but nonetheless gather data. Different from most works, our methodology will seek to leverage the datasets captured by nodes in the unsampled set $\hat{\mathcal{S}}$ via local D2D communications with nodes in the sampling set $\mathcal{S}$.

We denote the dataset at device $i$ for the specific ML application by $\mathcal{D}_i(t)$. $\mathcal{D}_i(0)$ is the initial data at $i$, which evolves as offloading takes place. Henceforth, we use calligraphic font (e.g., $\mathcal{D}_i(t)$) to denote a set, and non-calligraphic (e.g., $D_i(t) = |\mathcal{D}_i(t)|$) to denote its cardinality. Each data point $d \in \mathcal{D}_i(t)$ is represented as $d=(\mathbf{x}_d,y_d)$, where $\mathbf{x}_d\in \mathbb{R}^M$ is a feature vector of $M$ features, and $y_d\in\mathbb{R}$ is the target label.
\vspace{-1mm}
\subsection{Network Topology and Data Similarity Model} \label{sss:graph}
We consider a time-varying network graph $G = (\mathcal{N},\mathcal{E}(t))$, among the set of nodes $\mathcal{N}$ to represent the available D2D topology. 
Here, $\mathcal{E}(t)$ denotes the set of edges or connections between the nodes, where $(i,j)\in\mathcal{E}(t)$ if node $i$ is able/willing to transfer data in D2D mode to node $j$ at time $t$.
This willingness to transfer data can depend on the trust between the devices as an example, and whether the devices are D2D-enabled. For instance, smart home peripherals can likely transfer data to their owner's smartphone, while different smartphones in an airport may be unwilling to share data. 
We capture these potential D2D relationships using the adjacency matrix $\mathbf{A}(t) =[A_{i,j}(t)]_{1\leq i,j\leq N}$, where $A_{i,j}(t)=1$ if $(i,j)\in\mathcal{E}(t)$, and $A_{i,j}(t)=0$ otherwise.


We define $\Phi_{i,j}(t) \in [0,1]$ as the fraction of node $i$'s data offloaded to node $j$ at time $t$. To optimize this, we are interested in the similarity among local datasets. We define the similarity matrix $\boldsymbol{\lambda}(t) \triangleq [\lambda_{i,j}(t)]_{1\leq i,j\leq N}$ among the nodes at time $t$, where $0\leq\lambda_{i,j}(t)\leq 1$. 
Lower values of $\lambda_{i,j}(t)$ imply a higher dataset similarity between nodes $i$ and $j$, and thus less offloading benefit. 
In practice, neither the server nor the devices have exact knowledge of the local data distributions.

To this end, we consider an interpretation of similarity based on comparing data centroids across network devices. 
Each centroid is an average of the data contained within a data cluster. 
{\color{black}In our setting, each device $i$ has $\mathcal{C}_i$ clusters, the contents of which are determined based on K-means~\cite{krishna1999genetic} clustering which is commonly used in existing literature~\cite{jacksi2020clustering,ribas2012similarity} to estimate dataset similarities.}
Each cluster $c \in \mathcal{C}_i$ contains the data $\mathcal{D}_i^{c}(t)$ so that $\mathcal{D}_i(t) = \cup_{c \in \mathcal{C}_i} \mathcal{D}_i^{c}(t)$. 
Thus, the $c$-th centroid $\mu_{i}^{c}(t)$ at device $i$ is computed as:
\begin{equation} \label{eq:centroid_computation}
    \mu_{i}^{c}(t) = \frac{1}{\vert \mathcal{D}_i^{c}(t) \vert} \sum_{d \in \mathcal{D}_i^{c}(t)} \mathbf{x}_d. 
\end{equation}
For each D2D link, we determine the similarity value $\lambda_{i,j}(t)$ based on the minimum total centroid difference for all clusters between devices $i$ and $j$.\footnote{\color{black}The goal of the data similarity estimation process is to gauge the data offloading value. If two device pairs have similar minimum centroid difference, then our formulation estimates that these device pairs have similar benefit from data offloading.}
Specifically, we first need to match each cluster $b \in \mathcal{C}_j$ to some cluster $c \in \mathcal{C}_i$ with the smallest centroid difference between them, i.e., we want 
\begin{equation} \label{eq:min_centroid_cluster}
    c = \argmin_{c \in C_i} \vert \sigma_{i,j}^{c,b}(t) \vert,
\end{equation}
where $\sigma_{i,j}^{c,b}(t) = \mu_i^c(t) - \mu_j^b(t)$ and $\sigma_{i,j}^{c,b}(t)=0$ implies that clusters $b$ and $c$ are essentially identical.
However,~\eqref{eq:min_centroid_cluster} enables a single cluster $c$ to be the matching cluster for multiple $b \in \mathcal{C}_j$, which fails to fully characterize the D2D similarity between all of the data at devices $i$ and $j$ (as its possible for $\mathcal{D}_i^{c}(t)$ to be the only subset of data that is characterized at this link).
As a result, when a cluster $c \in \mathcal{C}_i$ has the smallest $\sigma_{i,j}^{c,b}(t)$ for multiple clusters $b \in \mathcal{C}_j$, we choose to link $c$ to the cluster $b$ with the smallest $\sigma^{c,b}_{i,j}(t)$, and subsequently use the heuristic $\widetilde{\mu}^{c}_i(t) \rightarrow \infty$ to replace $\mu^{c}_i(t)$ in~\eqref{eq:min_centroid_cluster} for the remaining similarity computations for the specific link between devices $i$ and $j$. 
In this way, each cluster $c \in \mathcal{C}_i$ links to at most one cluster $b \in \mathcal{C}_j$. 
Then the data similarity $\lambda_{i,j}(t)$ between devices $i$ and $j$ can be computed as
\begin{equation} \label{eq:init_lambda_calc}
    \lambda_{i,j}(t) = \sum_{b \in \mathcal{C}_j} \min_{c \in \mathcal{C}_i} \frac{\vert \sigma^{c,b}_{i,j}(t) \vert} {\vert \mathcal{C}_i \vert}.
\end{equation}
For $t>0$, $\lambda_{i,j}(t)$ depends on the changes in $\sigma_{i,j}^{c,b}(t)$, which in turn depends on the data offloading ratio $\Phi_{i,j}^{c,b}(t-1)$ from the previous time step. 
Using these cluster-specific offloading ratios, we can compute the aggregate data offloading ratio from device $i$ to $j$ as 
\begin{equation} \label{eq:data_off_ratio_calc}
    \Phi_{i,j}(t) = \sum_{b \in \mathcal{C}_j} \sum_{c \in \mathcal{C}_i} \Phi_{i,j}^{c,b}(t). 
\end{equation}
Because devices lack exact information about each other's datasets, the selection of specific data for offloading from device $i$ to device $j$ also depends on the data similarity $\lambda_{i,j}(t)$ between them, and we explain the exact mechanics during the data offloading process in Sec.~\ref{ss:ovr_problem}. 
To capture both the node connectivity and data similarity jointly for offloading, we also define the connectivity-similarity matrix $\boldsymbol{\Lambda}(t) = [\Lambda_{i,j}(t)]$, and $\boldsymbol{\Lambda}(t) \triangleq \boldsymbol{\lambda}(t) \circ \mathbf{A}(t)$, where $\circ$ represents the Hadamard product.

{\color{black} This above process to determine the data similarities involves device-to-server transmissions, as devices will transmit their calculated centroids to the server for further processing such as cluster matching. These computations are performed prior to the start of our optimization process, outlined in Sec.~\ref{ss:ovr_problem}, and involve much smaller communication overhead than ML model transmission during the global aggregations. }

\vspace{-1mm}
\subsection{Distributed Machine Learning Model} \label{ss:mlp}
The learning objective of FedL is to train a global ML model, parameterized by a vector $\mathbf{w} \in \mathbb{R}^p$ (e.g., $p$ weights in a neural network), by training at edge devices. 
Formally, for $t \in \{0, \cdots, T\}$, each sampled device $i \in \mathcal{S}$ is concerned with updating its model parameter vector $\mathbf{w}_i(t)$ on its local dataset $\mathcal{D}_i(t)$. The local loss at device $i \in \mathcal{N}$ is defined as
\begin{align}
    F(\mathbf{w}_i(t)|\mathcal{D}_i(t)) = \frac {\sum_{d \in \mathcal{D}_i(t)} f(\mathbf{w}_i(t),\mathbf{x}_d,y_d)}{D_i(t)},
\end{align}
where $f(\mathbf{w}_i(t),x_d,y_d)$ denotes the corresponding loss (e.g., squared error for a regression problem, cross-entropy for a classification problem~\cite{wang2019adaptive}) of each datapoint $d \in \mathcal{D}_i$. Each device minimizes its local loss sequentially via gradient descent:
\begin{equation} \label{eq:gd}
    \mathbf{w}_i(t) = \mathbf{w}_i(t-1) - \eta \nabla F(\mathbf{w}_i(t-1)|\mathcal{D}_i(t)),
\end{equation}
where $\eta > 0$ is the step size and $\nabla F(\mathbf{w}_i(t-1)|\mathcal{D}_i(t))$ is the average gradient over the local dataset $\mathcal{D}_i(t)$. With periodicity $\tau$, the server performs a weighted average of $\mathbf{w}_i (k\tau)$, $i \in \mathcal{S}$:
\begin{equation}\label{eq:w_agg}
    \mathbf{w}_{\mathcal{S}}(k\tau) = \frac{\sum_{i \in \mathcal{S}}  \Delta_i(k\tau) \mathbf{w}_i(k\tau)}{\sum_{i \in \mathcal{S}} \Delta_i(k\tau)},
\end{equation} 
where $k$ denotes the $k$-th aggregation, $k \in \{1,\cdots,K \}$, $K = \lceil T/\tau \rceil$, and $\Delta_{i}(k\tau) \triangleq  \sum_{t=(k-1)\tau+1}^{k\tau}D_i(t)$ denotes the total data located at node $i$ between $k-1$ and $k$. The server then synchronizes the sampled devices: $\mathbf{w}_i(k\tau) \leftarrow \mathbf{w}_\mathcal{S}(k\tau)$, $\forall i \in \mathcal{S}$. 

Since we are concerned with the performance of the global parameter $\mathbf{w}_{\mathcal{S}}$, we  define $\mathbf{w}_{\mathcal{S}}(t)$ as the weighted average of $\mathbf{w}_i(t)$ as in~\eqref{eq:w_agg} for each time $t$, though it is only computed at the server when $t = k\tau$.
The global loss that we seek to optimize considers the loss over all the datapoints in the network:  
\begin{align} \label{eq:Lws}
\hspace{-4mm}
F\hspace{-0.5mm}\left(\mathbf{w}_{\mathcal{S}}(t)|\mathcal{D}_{\mathcal{N}}(t) \right)
= \frac{\sum_{i \in \mathcal{N} } D_i(t) F(\mathbf{w}_{\mathcal{S}}(t) \vert \mathcal{D}_i(t))}{D_{\mathcal{N}}(t)},
\hspace{-2mm}
\end{align}
where $\mathcal{D}_{\mathcal{N}}(t)$ denotes the multiset of the datasets of all the devices at time $t$, and $D_{\mathcal{N}}(t) \triangleq \sum_{i \in \mathcal{N}} D_i(t)$. 


\subsection{Joint Sampling and Offloading Optimization} \label{ss:ovr_problem}
The node selection procedure should consider device characteristics mentioned in Sec.~\ref{sss:devices} and the data similarity among the nodes represented by the connectivity-similarity matrix ${\boldsymbol{\Lambda}}(t)$. 
{\color{black}Thus, the goal of our optimization is to select (i) the subset of devices $\mathcal{S}^{\star}$ to sample from a total budget of $S$ and (ii) the offloading ratios $\Phi^{\star}_{i,j}(t)$ among the devices to jointly minimize the loss associated with $\mathbf{w}_\mathcal{S}(t)$ and the associated energy resource consumption (in terms of data processing energy and data offloading energy use).} We consider a time average for the objective, as devices may rely on intermediate aggregations for real-time inferences. For the variables, we define the binary vector $\mathbf{x} \triangleq (x_1,\cdots,x_N)$ to represent device sampling status, i.e., if $i \in \mathcal{S}$ then $x_i = 1$, otherwise $x_i=0$, and matrix $\boldsymbol{\Phi}(t) \triangleq [\Phi_{i,j}(t)]_{1 \leq i,j \leq N}$ to represent the offloading ratios at time $t$. The resulting optimization problem $\boldsymbol{\mathcal{P}}$ is as follows:
\begin{align}
& (\boldsymbol{\mathcal{P}}):~
\underset{\mathbf{x},\{\boldsymbol{\Phi}(t) \}_{t=1}^{T}} 
{\textrm{minimize}}~
\frac{1}{T}\sum_{t=1}^{T}  \bigg( \underbrace{ \alpha F(\mathbf{w}_\mathcal{S}(t)|\mathcal{D}_{\mathcal{N}}(t))}_{(a)}  \label{eq:obj_new} \\ 
& + \underbrace{\beta \sum_{i \in \mathcal{S}} {E}_{i}^{P}(t)}_{(b)} + \underbrace{\gamma \sum_{k \in \mathcal{N}} {E}_{k}^{Tx}(t)}_{(c)} \bigg) \nonumber \\ 
&\textrm{subject to}\nonumber\\ 
& D_i^c(t) = D_i^c(t-1) + R_{i}^c(t), ~i\in\mathcal{N}, \label{eq:con1}\\
& R_{i}^{c}(t) = \sum_{k \in \mathcal{N}} 
\sum_{b \in \mathcal{C}_k} 
D_k^{b}(t-1)\Phi_{k,i}^{b,c}(t){\Lambda}_{k,i}(t-1), i\in\mathcal{N}, \hspace{-1mm} \label{eq:con2}\\ 
&(1-x_i)(1-x_k)\Phi_{k,i}(t) = 0, ~i,k \in \mathcal{N}, \label{eq:con4} \\ 
&x_k\Phi_{k,i}(t) = 0, ~i,k\in \mathcal{N}, \label{eq:con5} \\ 
&(1-A_{k,i}(t)) \Phi_{k,i}(t) = 0, ~i,k \in \mathcal{N}, \label{eq:con6} \\
&\sum_{i\in \mathcal{N}} x_i=S, \label{eq:con7}\\
& \sum_{c \in \mathcal{C}_i} D_{i}^{c}(t) = D_i(t), ~i \in \mathcal{N}, \label{eq:con8} \\
& \sum_{c \in \mathcal{C}_i} \sum_{b \in \mathcal{C}_k} \Phi_{k,i}^{b,c} (t) = \Phi_{k,i}(t), ~i,k \in \mathcal{N}, \label{eq:con9} \\ 
& \sum_{b \in \mathcal{C}_k} \Phi_{k,i}^{b,c}(t) \leq C_k, ~i,k \in \mathcal{N}, \label{eq:con10} \\ 
& \sigma_{k,i}^{b,c}(t) \hspace{-1mm} = \hspace{-1mm} \sigma_{k,i}^{b,c}(t \hspace{-0.5mm} - \hspace{-0.5mm} 1)(1 \hspace{-0.5mm} - \hspace{-0.5mm} \Phi_{k,i}^{b,c}(t)), b \in \mathcal{C}_k,  c \in \mathcal{C}_i, i,k \in \mathcal{N}, \hspace{-2mm} \label{eq:con11} \\
& \sum_{\substack{b \in \mathcal{C}_k \\ 
b \neq \argmin \vert \sigma^{b,c}_{k,i}(t) \vert}} 
\Phi_{k,i}^{b,c}(t) = 0,
~c \in \mathcal{C}_i, ~i,k \in \mathcal{N}, \label{eq:con12}\\
& E_i^{P}(t) = p_i(t) D_i(t) \leq P_i(t), i \in \mathcal{N}, \label{eq:con13} \\ 
& E_k^{Tx}(t) = D_k(t) \sum_{i \in \mathcal{S}} {\Phi_{k,i}(t) \psi_{k,i}(t)} \leq \Psi_{k}(t), k \in \mathcal{N}. \label{eq:con14} \\
& 0 \leq \Lambda_{k,i}(t) \leq 1, x_i \in \{0,1\}, ~i,k\in\mathcal{N}, \label{eq:con15} \\ 
& 0 \leq \Phi_{k,i}^{b,c}(t) \leq 1,  b \in \mathcal{C}_k,  c \in \mathcal{C}_i, ~i,k \in \mathcal{N}. \label{eq:con16} 
\end{align} 

\noindent 
The objective function~\eqref{eq:obj_new} captures the balance between three key terms: the FedL loss in~\eqref{eq:obj_new}(a) scaled by $\alpha$, the data processing resource use in~\eqref{eq:obj_new}(b) scaled by $\beta$, and the D2D offloading resource use in~\eqref{eq:obj_new}(c) scaled by $\gamma$. 
The data at sampled devices, i.e., $D_i(t)$ for $i \in \mathcal{S}$, changes over time in~\eqref{eq:con1} based on the total received data $R_i(t)$ for device $i$. $R_i(t)$ is determined in~\eqref{eq:con2} by scaling the data transmissions from each device $k \in \hat{\mathcal{S}}$ to device $i$ according to the data similarity, which we explain further in the next paragraph. 
Through~\eqref{eq:con4}-\eqref{eq:con6}, offloading only occurs between single-hop D2D neighbors from $k \in \hat{\mathcal{S}}$ to $i \in \mathcal{S}$, while~\eqref{eq:con7} ensures that $(\boldsymbol{\mathcal{P}})$ maintains compliance with the desired sampling size, i.e., $|\mathcal{S}^{\star}| = S$.
The two expressions~\eqref{eq:con8}-\eqref{eq:con9} capture the definitions for $D_i(t)$ and $\Phi_{k,i}(t)$. 
Next,~\eqref{eq:con10} ensures that the total offload rate at each device cannot exceed the quantity of its local data clusters,~\eqref{eq:con11} is the update rule for centroid differences between two clusters $b$ and $c$ at two difference devices $i$ and $k$, and~\eqref{eq:con12} clarifies that data offloading only occurs among the matched data clusters as described in Sec.~\ref{sss:graph}. 
The constraints~\eqref{eq:con13}-\eqref{eq:con14} ensure that our D2D offloading solution adheres to maximum device data processing limits $P_i(t)$, and D2D communication limits $\Psi_i(t)$. Finally,~\eqref{eq:con15}-\eqref{eq:con16} express the lower and upper limits of the connectivity-similarity values and the data offloading variables.

\textbf{Similarity-aware D2D offloading:} The amount of raw data device $i$ receives from $k$ is $\sum_{b \in \mathcal{C}_k} \Phi_{k,i}^{b,c}(t) D_{k}^{b}(t-1)$. 
Ideally, device $i$ will receive data from its neighbors that is dissimilar to $\mathcal{D}_i(0)$. 
However, neither $i$ nor $k$ have full knowledge of each others' datasets in this distributed scenario (nor does the server). 
Therefore, data offloading in $\boldsymbol{\mathcal{P}}$ is conducted through a uniformly random selection of $\Phi_{k,i}^{b,c}(t) D_{k}^{b}(t-1)$ data points (for each data cluster $b \in \mathcal{C}_k$) from $k$ to send to $i$. 
From the viewpoint of cluster $b$ at device $k$, the estimated overlapping data that arrives at $i$ is $\Phi_{k,i}^{b,c}(t) D_{k}^{b}(t-1) (1-\Lambda_{k,i}(t-1))$, and the resulting useful data is $\Phi_{k,i}^{b,c}(t) D_{k}^{b}(t-1)\Lambda_{k,i}(t-1)$. 
We use the entirety of $\Lambda_{k,i}(t)$ for this estimation because it is possible for data in cluster $b$ to have overlap with multiple clusters in $\mathcal{C}_i$ at device $i$. So, overall data similarity in the form of $\Lambda_{k,i}(t)$ provides a better holistic estimate that an incoming data from device $k$ to $i$ will be similar to \textit{any} existing data at device $i$. 
Subsequently, we update $\sigma_{k,i}^{b,c}(t)$ per~\eqref{eq:con11} and then use the data similarity calculation in~\eqref{eq:init_lambda_calc} to update $\Lambda_{k,i}(t)$.
In particular, when $k$ transfers all of its data to $i$ (i.e., when $\Phi_{k,i}^{b,c}(t)=1$ $\forall b \in \mathcal{C}_k$), 
$\sigma_{k,i}^{b,c} = 0$ $\forall b \in \mathcal{C}_k$ and $\forall c \in \mathcal{C}_i$, leading to $\Lambda_{k,i}(t) \rightarrow 0$, thus preventing further data offloading  according to~\eqref{eq:con2}. 
Imposing these constraints promotes data diversity among the sampled nodes through offloading.

\textbf{Solution overview:} Problem~$\boldsymbol{\mathcal{P}}$ faces two major challenges: (i) it requires a concrete model of the loss function in~\eqref{eq:obj_new}(a) with respect to the datasets, which is, in general, intractable for deep learning models~\cite{goodfellow2016deep}, and (ii) even if the loss function is known, the coupled sampling and offloading procedures make this problem an NP-hard mixed integer programming problem. To overcome this, we will first consider the offloading subproblem for a fixed sampling strategy, and develop a sequential convex programming method to solve it in Sec.~\ref{s:p1}. Then, we will integrate this into a graph convolutional network (GCN)-based methodology that learns the relationship between the network properties, sampling strategy (with its corresponding offloading), and the resulting FedL model accuracy in Sec.~\ref{s:p2}. An overall flowchart of our methodology is given in Fig.~\ref{fig:BigPicture}.

\vspace{-1mm}
\section{Developing the Offloading Optimizer} 
\label{s:p1}
\noindent In this section, we study the offloading optimization subproblem of $\boldsymbol{\mathcal{P}}$. Specifically, our theoretical analysis of~\eqref{eq:obj_new}(a) under common assumptions will yield an efficient approximation of the FedL loss objective in terms of the offloading variables (Sec.~\ref{ss:p1b}). We will then develop our solver for the resulting optimization (Sec.~\ref{ss:p1c}). 

\vspace{-1mm}
\subsection{Definitions and Assumptions} \label{ss:p1def}
To aid our theoretical analysis of FedL, similar to \cite{wang2019adaptive}, we will consider a hypothetical ML training process that has access to the entire dataset $\mathcal{D}_{\mathcal{N}}(t)$ at each time instance. The parameter vector $\mathbf{v}_k(t)$ for this centralized model is trained as follows: (i) at each global aggregation $t = k\tau$, $\mathbf{v}_k(t)$ is synchronized with $\mathbf{w}_{\mathcal{S}}(t)$, i.e., $\mathbf{v}_k(t) \leftarrow \mathbf{w}_{\mathcal{S}}(t)$, and (ii) in-between global aggregation periods, $\mathbf{v}_k(t)$ is trained based on gradient descent iterations to minimize the global loss $F(\mathbf{v}_k(t) \vert \mathcal{D}_{\mathcal{N}}(t))$. 
\begin{definition} [Difference between sampled and unsampled gradients] \label{def:e} 
We define the instantaneous difference between $\nabla F(\mathbf{w}_{\mathcal{S}}(t)\vert \mathcal{D}_{\mathcal{N}}(t))$, the gradient with respect to the full dataset across the network, and $\nabla F(\mathbf{w}_{\mathcal{S}}(t)\vert \mathcal{D}_{\mathcal{S}}(t))$, the gradient with respect to the sampled dataset, as:
\begin{equation}
\hspace{-0.0mm}
\begin{aligned}  \label{eq:l1_result} 
&\zeta(\mathbf{w}_{\mathcal{S}}(t)) \triangleq \frac{G_{\mathcal{S}}(t)}{D_{\mathcal{S}}(t)} - \frac{\sum_{i \in \mathcal{N}} D_i(t) \nabla F(\mathbf{w}_{\mathcal{S}}(t)|\mathcal{D}_i(t))}{D_{\mathcal{N}}(t)},
\end{aligned}
\hspace{-4mm}
\end{equation} 
where~$G_{\mathcal{S}}(t) \triangleq \sum_{i \in \mathcal{S}}D_i(t)\nabla F(\mathbf{w}_{\mathcal{S}}(t)|\mathcal{D}_i(t))$ is the scaled sum of gradients on the sampled datasets, and~$D_{\mathcal{S}}(t) = \sum_{i \in \mathcal{S}}D_i(t)$ is the total data across the sampled devices.
\end{definition}


\begin{definition}[Difference between sampled and unsampled gradients] We define $\delta_i(t)$ as the upper bound between the gradient computed on $\mathcal{D}_i(t)$ for $i \in \mathcal{S}$ and $\mathcal{D}_{\mathcal{N}}(t)$ at time $t$: 
\label{def:deltai}

\vspace{-2mm}
\small
\begin{equation}
\Vert \nabla F(\mathbf{w}_{\mathcal{S}}(t)|\mathcal{D}_i(t)) \hspace{-.45mm}- \hspace{-.45mm} \nabla F(\mathbf{w}_{\mathcal{S}}(t)|\mathcal{D}_{\mathcal{N}}(t))\hspace{-.45mm} -\hspace{-.45mm} \zeta(\mathbf{w}_{\mathcal{S}}(t)) \hspace{-.4mm}\Vert\hspace{-.5mm}\leq\hspace{-.5mm} \delta_i(t).
\end{equation}
\end{definition}


\vspace{-0.05in}
We also make the following standard assumptions~\cite{wang2019adaptive,wang2021network} on the loss function $F(\mathbf{w})$ for the ML model being trained: 
\begin{assumption} 
We assume $F(\mathbf{w})$ is convex with respect to $\mathbf{w}$, L-Lipschitz, i.e., $\Vert F(\mathbf{w}) - F(\mathbf{w}^{\prime}) \Vert \leq L\Vert \mathbf{w} - \mathbf{w}^{\prime}\Vert $, and  $\beta$-smooth, i.e., $\Vert \nabla F(\mathbf{w}) - \nabla F(\mathbf{w}^{\prime}) \Vert \leq \beta \Vert \mathbf{w} - \mathbf{w}^{\prime}\Vert$, $\forall \mathbf{w}, \mathbf{w}^{\prime}$.
\end{assumption}
Despite these assumptions, we will show in Sec.~\ref{s:numRes} that our results still obtain significant improvements in practice for neural networks which do not obey the above assumptions.
\vspace{-1mm}
\subsection{Upper Bound on Convergence} \label{ss:p1b}
For convergence analysis, we assume that devices only offload the same data once, and assume that recipient nodes always keep received data. This must be done to ensure that the optimal global model parameters remain constant throughout time.
The following theorem gives an upper bound on the difference between the parameters of sampled FedL and those from the centralized learning, i.e., $\Vert \mathbf{w}_{\mathcal{S}}(t) - \mathbf{v}_k(t) \Vert$, over time:

\begin{theorem}[Upper bound on the difference between sampled FedL and centralized learning]\label{thm:error} 
Assuming $\eta \leq {\beta}^{-1}$, the upper-bound on the difference between $\mathbf{w}_{\mathcal{S}}(t)$ and $\mathbf{v}_k(t)$ within the local update period before the $k$-th global aggregation, $t \in \{(k-1)\tau+1,...,k\tau\}$, is given by:
\vspace{-4.5mm}

\small
\begin{equation} \label{th1:1}
\hspace{-0mm}
    \Vert \mathbf{w}_{\mathcal{S}}(t) - \mathbf{v}_{k}(t)\Vert \leq \frac{1}{\beta} \hspace{-5mm} \sum_{y = (k-1)\tau+1}^{t} \hspace{-1mm} \bigg(\Upsilon (y,k) + \Vert \zeta(\mathbf{w}_{\mathcal{S}}(y-1))\Vert\bigg),
    \hspace{-3mm}
\end{equation}
\normalsize
where $\Upsilon (y,k) \triangleq \delta_{\mathcal{S}}(y) (2^{y-1-(k-1)\tau}-1)$,
and
\begin{equation} \label{th1:3}
   \delta_{\mathcal{S}}(t) \triangleq \left({\sum_{i \in \mathcal{S}}D_i(t)\delta_i(t)}\right)\left({\sum_{i \in \mathcal{S}}D_i(t)}\right)^{-1}.
\end{equation} 
\end{theorem}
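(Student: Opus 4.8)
The plan is to bound the one-step drift of the sampled aggregate $\mathbf{w}_{\mathcal{S}}(t)$ away from the centralized iterate $\mathbf{v}_k(t)$ and accumulate it over the local period, exploiting that both sequences are synchronized at $t=(k-1)\tau$, so that $\Vert\mathbf{w}_{\mathcal{S}}((k-1)\tau)-\mathbf{v}_k((k-1)\tau)\Vert=0$. Writing $\mathbf{w}_{\mathcal{S}}(t)$ as the $\Delta_i$-weighted average of the per-device gradient-descent steps in~\eqref{eq:gd}, one obtains $\mathbf{w}_{\mathcal{S}}(t)=\mathbf{w}_{\mathcal{S}}(t-1)-\eta A$ with $A$ the sampled weighted average of the device gradients $\nabla F(\mathbf{w}_i(t-1)\vert\mathcal{D}_i)$, whereas $\mathbf{v}_k(t)=\mathbf{v}_k(t-1)-\eta\nabla F(\mathbf{v}_k(t-1)\vert\mathcal{D}_{\mathcal{N}})$. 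The first key idea is that, under Assumption~1 (convex, $\beta$-smooth) with $\eta\le\beta^{-1}$, the full-dataset gradient-descent map $T(\mathbf{w})=\mathbf{w}-\eta\nabla F(\mathbf{w}\vert\mathcal{D}_{\mathcal{N}})$ is non-expansive (via co-coercivity). Introducing the auxiliary point $\tilde{\mathbf{w}}(t)=T(\mathbf{w}_{\mathcal{S}}(t-1))$ and applying the triangle inequality gives $\Vert\mathbf{w}_{\mathcal{S}}(t)-\mathbf{v}_k(t)\Vert\le \Vert\mathbf{w}_{\mathcal{S}}(t)-\tilde{\mathbf{w}}(t)\Vert+\Vert T(\mathbf{w}_{\mathcal{S}}(t-1))-T(\mathbf{v}_k(t-1))\Vert$, where non-expansiveness bounds the second term by $\Vert\mathbf{w}_{\mathcal{S}}(t-1)-\mathbf{v}_k(t-1)\Vert$ with coefficient exactly one, i.e.\ with no amplification. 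This is precisely what keeps the sampling-bias term $\zeta$ linear in the final sum.

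The remaining term $\Vert\mathbf{w}_{\mathcal{S}}(t)-\tilde{\mathbf{w}}(t)\Vert=\eta\Vert A-\nabla F(\mathbf{w}_{\mathcal{S}}(t-1)\vert\mathcal{D}_{\mathcal{N}})\Vert$ is the per-step gradient-direction mismatch. I would split it, at the common point $\mathbf{w}_{\mathcal{S}}(t-1)$, into (i) the difference between the sampled weighted average of device gradients evaluated at the device iterates versus at $\mathbf{w}_{\mathcal{S}}(t-1)$, controlled by $\beta$-smoothness in terms of the device drift $\bar{h}(t-1)$; and (ii) the difference between the sampled weighted average and the full-network gradient at $\mathbf{w}_{\mathcal{S}}(t-1)$, which by Definition~1 is exactly $\zeta(\mathbf{w}_{\mathcal{S}}(t-1))$. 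This yields the non-amplifying recursion $e(t)\le e(t-1)+\eta\beta\,\bar{h}(t-1)+\eta\Vert\zeta(\mathbf{w}_{\mathcal{S}}(t-1))\Vert$, where $e(t)\triangleq\Vert\mathbf{w}_{\mathcal{S}}(t)-\mathbf{v}_k(t)\Vert$.

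It then remains to bound the device drift $\bar{h}(t-1)$, the $\Delta_i$-weighted average of $\Vert\mathbf{w}_i(t-1)-\mathbf{w}_{\mathcal{S}}(t-1)\Vert$. Here the crucial observation is that, relative to the sampled aggregate, the common bias $\zeta$ cancels: subtracting the aggregate $A$ from a single device gradient removes $\zeta$ and leaves only the residual bounded by $\delta_i$ (Definition~2) together with smoothness terms proportional to the drift itself. This produces a geometric recursion of the form $\bar{h}(t)\le(1+\eta\beta)\bar{h}(t-1)+\eta\delta_{\mathcal{S}}(t)$, started from $\bar{h}((k-1)\tau)=0$; summing the geometric series and using $\eta\beta\le1$ to replace the growth factor by $2$ gives $\bar{h}(y-1)\le \beta^{-1}\delta_{\mathcal{S}}(y)\,(2^{y-1-(k-1)\tau}-1)=\beta^{-1}\Upsilon(y,k)$, in which the doubling factor of $\Upsilon$ originates. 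Substituting this into the recursion for $e(t)$, unrolling from the synchronization point, and finally using $\eta\le\beta^{-1}$ on every increment collapses everything into $\beta^{-1}\sum_{y=(k-1)\tau+1}^{t}(\Upsilon(y,k)+\Vert\zeta(\mathbf{w}_{\mathcal{S}}(y-1))\Vert)$, as claimed; note the first summand $\Upsilon((k-1)\tau+1,k)=0$, consistent with zero drift immediately after synchronization.

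I expect the main obstacle to be the careful bookkeeping that keeps the two error sources separated, i.e.\ ensuring the residual-driven device drift is the only quantity that inherits the doubling factor $2^{y-1-(k-1)\tau}$ while the sampling bias $\zeta$ accumulates linearly. This hinges on two structural facts that must be verified precisely: that the common bias cancels against the aggregate in the drift recursion (so only $\delta_i$ is amplified), and that convexity upgrades smoothness to non-expansiveness of $T$ (so $e$ is not amplified). A secondary technical point is reconciling the weights $\Delta_i$ with $D_i$ so that the sampled average matches $G_{\mathcal{S}}/D_{\mathcal{S}}$ and the weighted residual average reproduces $\delta_{\mathcal{S}}$; this is where the ``offload once, keep received data'' assumption stated at the top of Sec.~\ref{ss:p1b} is invoked, since it freezes the data, and hence the weights and the optimal model, within the period.
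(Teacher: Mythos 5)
Your proposal is a genuinely different route from the paper's, and its overall architecture is sound, but it contains one concrete gap in the step that generates the doubling factor. First, the comparison: the paper's proof (Appendix~\ref{app:main1}) never uses convexity, non-expansiveness, or a device-drift recursion. It asserts the aggregate dynamics $\mathbf{w}_{\mathcal{S}}(t)=\mathbf{w}_{\mathcal{S}}(t-1)-\eta\big(\nabla F(\mathbf{w}_{\mathcal{S}}(t-1)|\mathcal{D}_{\mathcal{N}}(t))+\zeta(\mathbf{w}_{\mathcal{S}}(t-1))\big)$ directly from Definition~\ref{def:e} (thereby silently identifying the average of local gradients taken at the local iterates $\mathbf{w}_i(t-1)$ with the sampled gradient taken at $\mathbf{w}_{\mathcal{S}}(t-1)$), applies the triangle inequality and $\beta$-smoothness to reach a recursion of the form $e(t)\le e(t-1)+\eta\beta\, e(t-1)+\eta\Vert\zeta(\mathbf{w}_{\mathcal{S}}(t-1))\Vert$, and then disposes of the amplified middle term by citing Lemma~3 of~\cite{wang2019adaptive}, which supplies the factor $(2^{t-1-(k-1)\tau}-1)$ wholesale; rearranging and telescoping from the synchronization point finishes the proof. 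You instead keep the aggregate error unamplified via non-expansiveness of the centralized gradient-descent map (this is where you use convexity, which the paper's proof of this theorem never touches), work with the true FedL dynamics, and derive the doubling factor internally from a drift recursion. Your $\zeta$-cancellation observation is correct and is the right structural fact: combining Definitions~\ref{def:e} and~\ref{def:deltai} gives $\Vert\nabla F(\mathbf{w}_{\mathcal{S}}|\mathcal{D}_i)-G_{\mathcal{S}}/D_{\mathcal{S}}\Vert\le\delta_i$, so only $\delta_i$ feeds the drift. Your final assembly (using $\eta\le\beta^{-1}$ on each increment, zero error at synchronization, $\Upsilon((k-1)\tau+1,k)=0$) matches the theorem exactly. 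What each approach buys: yours is self-contained and honest about where the gradients are evaluated; the paper's is shorter because the hard step is outsourced to the cited lemma.

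The gap is the constant in your drift recursion $\bar h(t)\le(1+\eta\beta)\bar h(t-1)+\eta\delta_{\mathcal{S}}(t)$. With drift measured relative to the aggregate, smoothness must be paid \emph{twice} per step: once for $\nabla F(\mathbf{w}_i(t-1)|\mathcal{D}_i)$ versus $\nabla F(\mathbf{w}_{\mathcal{S}}(t-1)|\mathcal{D}_i)$ (cost $\beta h_i(t-1)$), and once more for $G_{\mathcal{S}}/D_{\mathcal{S}}$ versus your aggregate step $A$ (cost $\beta\bar h(t-1)$), since $A$ is evaluated at the local iterates while $G_{\mathcal{S}}/D_{\mathcal{S}}$ is evaluated at the aggregate. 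Averaging over $i\in\mathcal{S}$ therefore yields $\bar h(t)\le(1+2\eta\beta)\bar h(t-1)+\eta\,\delta_{\mathcal{S}}(t)$, and unrolling gives roughly $\tfrac{\delta_{\mathcal{S}}}{2\beta}\big(3^{y-1-(k-1)\tau}-1\big)$, which is weaker than $\beta^{-1}\Upsilon(y,k)$ and does not reproduce the stated base-$2$ bound. To recover the factor of $2$, the drift must be measured against $\mathbf{v}_k(t)$ (as in Lemma~3 of~\cite{wang2019adaptive}), where only one smoothness application is needed; but then the gradient-divergence term is evaluated at $\mathbf{v}_k(t-1)$, whereas Definition~\ref{def:deltai} bounds it only at $\mathbf{w}_{\mathcal{S}}(t)$, so you would have to read $\delta_i(t)$ as a point-independent divergence bound. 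In fairness, the paper's own proof makes exactly the same gloss through its citation (its $\delta_i$ is defined differently from the cited lemma's), but as written your recursion does not deliver $\Upsilon$ with base $2$, so this step needs to be repaired or the mismatch acknowledged.
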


\begin{proof}
See Appendix~\ref{app:main1}.
\end{proof}
Through $\delta_{\mathcal{S}}(t)$, Theorem~\ref{thm:error} establishes a relationship between the difference in model parameters and the datapoints $D_i(t)$ in the sampled set $i \in \mathcal{S}$. Using this, we obtain an upper bound on the difference between our $\mathbf{w}_{\mathcal{S}}(t)$ and the global minimizer of model loss $\mathbf{w}^*(t) =\argmin_{\mathbf{w}}   F(\mathbf{w}|\mathcal{D}_{\mathcal{N}}(t))$:

\begin{corollary}[Upper bound on the difference between sampled FedL and the optimal] \label{c1}
The difference of the loss induced by $\mathbf{w}_{\mathcal{S}}(t)$ compared to the loss induced by $\mathbf{w}^*(t)$ for $t \in \{(k-1)\tau,\cdots,k\tau-1\}$, is given by:
\begin{equation} \label{eq:cl1_result}
\begin{aligned}
& F(\mathbf{w}_{\mathcal{S}}(t)|\mathcal{D}_{\mathcal{N}}(t)) - F(\mathbf{w}^*(t)|\mathcal{D}_{\mathcal{N}}(t))\leq  \\
&g(\hat{\Upsilon}(\hat{K})) \triangleq \left(t \xi \eta \left(1 - \frac{\beta \eta}{2}\right) - \frac{(\hat{K}+1)L}{\beta \epsilon^2} \hat{\Upsilon}(\hat{K}) \right)^{-1} ,
\end{aligned}
\end{equation}
\normalsize
where $\hat{\Upsilon}(\hat{K})\hspace{-0.5mm} \triangleq\hspace{-0.5mm} \sum_{y=(\hat{K}-1)\tau+1}^{\hat{K}\tau}  \left(\Upsilon(\hat{K},y) \hspace{-0.5mm}+\hspace{-0.5mm} \Vert \zeta(\mathbf{w}_{\mathcal{S}}(y\hspace{-0.5mm}-\hspace{-0.5mm}1))\Vert \right)$, $\hat{K} = \floor{t /\tau}$, and $\xi = \min_k \frac{1}{{\Vert \mathbf{v}_k((k-1)\tau) - \mathbf{w}^*(t) \Vert}^2}$.
\end{corollary}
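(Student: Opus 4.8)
The plan is to follow the convergence framework of \cite{wang2019adaptive}, bridging the sampled federated model $\mathbf{w}_{\mathcal{S}}(t)$ to the global minimizer $\mathbf{w}^*(t)$ through the auxiliary centralized model $\mathbf{v}_k(t)$, using Theorem~\ref{thm:error} to control the gap between them. First I would decompose the target quantity via the triangle inequality together with the $L$-Lipschitz property of $F$:
\[
F(\mathbf{w}_{\mathcal{S}}(t)|\mathcal{D}_{\mathcal{N}}(t)) - F(\mathbf{w}^*(t)|\mathcal{D}_{\mathcal{N}}(t)) \leq \big[F(\mathbf{v}_k(t)|\mathcal{D}_{\mathcal{N}}(t)) - F(\mathbf{w}^*(t)|\mathcal{D}_{\mathcal{N}}(t))\big] + L\Vert \mathbf{w}_{\mathcal{S}}(t) - \mathbf{v}_k(t)\Vert ,
\]
so that the first bracket is a pure centralized-descent term while the second is exactly what Theorem~\ref{thm:error} bounds, i.e., $\Vert \mathbf{w}_{\mathcal{S}}(t) - \mathbf{v}_k(t)\Vert \leq \tfrac{1}{\beta}\hat{\Upsilon}(\hat{K})$ accumulated over the aggregation period. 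Here I would lean on the assumption (stated before Theorem~\ref{thm:error}) that data is offloaded only once and retained, which freezes the minimizer so that $\mathbf{w}^*(t)$ is effectively constant and the comparison is meaningful.

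Next I would establish the per-iteration descent recursion for the centralized model. Writing $\theta_k(t) \triangleq F(\mathbf{v}_k(t)|\mathcal{D}_{\mathcal{N}}(t)) - F(\mathbf{w}^*(t)|\mathcal{D}_{\mathcal{N}}(t))$, the standard gradient-descent lemma for a convex $\beta$-smooth objective with $\eta \leq \beta^{-1}$ gives $\theta_k(t) \leq \theta_k(t-1) - \eta\left(1 - \tfrac{\beta\eta}{2}\right)\Vert \nabla F(\mathbf{v}_k(t-1))\Vert^2$, and convexity yields $\Vert \nabla F(\mathbf{v}_k(t-1))\Vert^2 \geq \theta_k(t-1)^2 / \Vert \mathbf{v}_k(t-1) - \mathbf{w}^*(t)\Vert^2$. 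Combining these and dividing through produces the inverse-gap increment $\tfrac{1}{\theta_k(t)} - \tfrac{1}{\theta_k(t-1)} \geq \eta\left(1 - \tfrac{\beta\eta}{2}\right)\xi$, where the definition of $\xi$ as the minimum over $k$ of $1/\Vert \mathbf{v}_k((k-1)\tau) - \mathbf{w}^*(t)\Vert^2$ supplies the required uniform lower bound. Telescoping across all $t$ iterations accumulates the $t\,\xi\,\eta\left(1 - \tfrac{\beta\eta}{2}\right)$ term in the denominator of $g(\cdot)$.

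The main obstacle — and the step demanding the most care — is stitching the per-period recursions across the resynchronization points $t = k\tau$, where $\mathbf{v}_{k+1}$ is reset to $\mathbf{w}_{\mathcal{S}}$ and the inverse-gap sequence is perturbed. At each such boundary the federated loss exceeds the centralized loss by at most $L\Vert \mathbf{w}_{\mathcal{S}}(k\tau) - \mathbf{v}_k(k\tau)\Vert \leq \tfrac{L}{\beta}\hat{\Upsilon}(k)$ through Theorem~\ref{thm:error}, which induces a downward correction on $1/\theta$ of order $\tfrac{L\hat{\Upsilon}}{\beta}\cdot\tfrac{1}{\theta\,\theta'}$. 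Invoking the lower bound $\theta_k(t) \geq \epsilon$ on the loss gaps (so that $1/(\theta\,\theta') \leq 1/\epsilon^2$) and summing the corrections over the $\hat{K}+1$ aggregation periods up to time $t$, each bounded by $\hat{\Upsilon}(\hat{K})$, yields the aggregate penalty $\tfrac{(\hat{K}+1)L}{\beta\epsilon^2}\hat{\Upsilon}(\hat{K})$. Subtracting this penalty from the accumulated descent term gives a lower bound on $1/\theta_{\mathcal{S}}(t)$, and inverting it produces exactly $g(\hat{\Upsilon}(\hat{K}))$; I would close by verifying that the denominator remains positive so the final inversion is valid and the bound is non-vacuous.
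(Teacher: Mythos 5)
Your proposal is correct and follows essentially the same route as the paper's proof: an inverse-gap ($1/\theta$) telescoping analysis of the centralized model's gradient descent (which the paper imports as Lemmas 2 and 6 of \cite{wang2019adaptive} and you re-derive from the standard convex, $\beta$-smooth descent lemma), combined with per-resynchronization corrections bounded via the $L$-Lipschitz property, Theorem~\ref{thm:error}, and the $\epsilon$-lower bound on the gaps, summed over $\hat{K}+1$ boundaries and then inverted. Note only that your opening additive Lipschitz decomposition is not what you ultimately use --- your third paragraph correctly switches to treating the final $\mathbf{v}_{\hat{K}+1}(t)\to\mathbf{w}_{\mathcal{S}}(t)$ step as one more inverse-gap correction of size $\tfrac{L}{\beta\epsilon^2}\hat{\Upsilon}(\hat{K})$, exactly as in the paper's Appendix~\ref{app:c1}.
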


\begin{proof}
See Appendix~\ref{app:c1}.
\end{proof}
As our ultimate goal is an expression of~\eqref{eq:obj_new}(a) in terms of the data $\mathcal{D}_i(t)$ at each node, we consider the relationship between $g(\hat{\Upsilon}({\hat{K}}))$ and $\mathcal{D}_i(t)$, which is clearly non-convex through $\hat{\Upsilon}({\hat{K}})$. Since $\hat{\Upsilon}(\hat{K}) \ll 1$ (see Appendix~\ref{app:c1}),~\eqref{eq:cl1_result} can be approximated using the first two terms of its Taylor series:
\begin{equation} \label{eq:fupsilon}
\begin{aligned}
& g(\hat{\Upsilon}) \approx \frac{1}{t \xi \eta (1 -\frac{\eta \beta}{2})} + \frac{{(\hat{K}+1)L}}{\beta \epsilon^2\left(t \xi \eta (1 -\frac{\eta \beta}{2})\right)^2} \hat{\Upsilon}.
\end{aligned}
\end{equation} 
At each time instant, the first term in the right hand side (RHS) of \eqref{eq:fupsilon} is a constant. Thus, under this approximation, the RHS of \eqref{eq:cl1_result} becomes proportional to $\hat{\Upsilon}$, which is in turn a function of $\delta_{\mathcal{S}}(t)$. The final step is to bound the expression for $\delta_i(t)$, and thus their weighted sum $\delta_{\mathcal{S}}(t)$, in terms of the $D_i(t)$, $\forall i \in \mathcal{S}$.



\begin{proposition}[Upper bound on the difference between local gradients] \label{prop:1} The difference in gradient with respect to a sampled device dataset vs. the full dataset satisfies:
\begin{equation}\label{eq:lemma}
\hspace{-0mm}
\begin{aligned} 
&\Vert\nabla F(\mathbf{w}_{\mathcal{S}}(t)|\mathcal{D}_i(t)) - \nabla F\left(\mathbf{w}_{\mathcal{S}}(t)|\mathcal{D}_{\mathcal{N}}(t) \right) - \zeta(\mathbf{w}_{\mathcal{S}}(t))\Vert \\
&  \leq \left(\frac{D_{\mathcal{N}}(t)-D_{\mathcal{S}}(t)}{D_{\mathcal{N}}(t)}\right) \overline{\nabla F(t)} + \frac{\gamma}{\sqrt{D_i(t)}} + C \equiv \delta_i(t), 
\end{aligned}
\hspace{-5mm}
\end{equation}
where $C \triangleq \left({D_{\mathcal{N}}(t)}\right)^{-1}\sum_{i \in \hat{\mathcal{S}}} D_i(t)\nabla F(\mathbf{w}_{\mathcal{S}}(t)|\mathcal{D}_i(t))$, $\hat{\mathcal{S}} = \mathcal{N} \setminus \mathcal{S}$, $\gamma$ is a constant independent of $\mathcal{D}_i(t)$, and 
\begin{equation} \label{eq:nablaFbar} 
\overline{\nabla F(t)} \triangleq \left({D_{\mathcal{S}}(t)}\right)^{-1}{\sum_{i \in \mathcal{S}}D_i(t)\nabla F(\mathbf{w}_{\mathcal{S}}(t)|\mathcal{D}_i(t))}.
\end{equation}
\end{proposition}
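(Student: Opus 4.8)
The plan is to collapse the left-hand side to a single gradient difference through an exact algebraic identity, split off the two ``structural'' terms with the triangle inequality, and then control the one remaining genuinely statistical term by a concentration estimate.

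First I would decompose the full-network gradient into its sampled and unsampled contributions. Differentiating the global loss~\eqref{eq:Lws} gives $\nabla F(\mathbf{w}_{\mathcal{S}}(t)\vert\mathcal{D}_{\mathcal{N}}(t)) = D_{\mathcal{N}}(t)^{-1}\sum_{j\in\mathcal{N}} D_j(t)\nabla F(\mathbf{w}_{\mathcal{S}}(t)\vert\mathcal{D}_j(t))$, and splitting the sum over $\mathcal{N}=\mathcal{S}\cup\hat{\mathcal{S}}$ together with the definitions of $\overline{\nabla F(t)}$ in~\eqref{eq:nablaFbar} and of $C$ yields $\nabla F(\mathbf{w}_{\mathcal{S}}(t)\vert\mathcal{D}_{\mathcal{N}}(t)) = \frac{D_{\mathcal{S}}(t)}{D_{\mathcal{N}}(t)}\,\overline{\nabla F(t)} + C$. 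Next, from Definition~\ref{def:e}, since $G_{\mathcal{S}}(t)/D_{\mathcal{S}}(t)=\overline{\nabla F(t)}$ and the subtracted term in $\zeta$ is exactly $\nabla F(\mathbf{w}_{\mathcal{S}}(t)\vert\mathcal{D}_{\mathcal{N}}(t))$, we have $\zeta(\mathbf{w}_{\mathcal{S}}(t))=\overline{\nabla F(t)}-\nabla F(\mathbf{w}_{\mathcal{S}}(t)\vert\mathcal{D}_{\mathcal{N}}(t))$; substituting the decomposition gives $\zeta(\mathbf{w}_{\mathcal{S}}(t)) = \frac{D_{\mathcal{N}}(t)-D_{\mathcal{S}}(t)}{D_{\mathcal{N}}(t)}\,\overline{\nabla F(t)} - C$.

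I would then plug this expression for $\zeta(\mathbf{w}_{\mathcal{S}}(t))$ into the left-hand side while keeping the difference $\nabla F(\mathbf{w}_{\mathcal{S}}(t)\vert\mathcal{D}_i(t)) - \nabla F(\mathbf{w}_{\mathcal{S}}(t)\vert\mathcal{D}_{\mathcal{N}}(t))$ intact. The triangle inequality then separates the norm into exactly three pieces: the residual $\Vert\nabla F(\mathbf{w}_{\mathcal{S}}(t)\vert\mathcal{D}_i(t)) - \nabla F(\mathbf{w}_{\mathcal{S}}(t)\vert\mathcal{D}_{\mathcal{N}}(t))\Vert$, the term $\frac{D_{\mathcal{N}}(t)-D_{\mathcal{S}}(t)}{D_{\mathcal{N}}(t)}\Vert\overline{\nabla F(t)}\Vert$, and $\Vert C\Vert$, the last two of which already match the claimed bound. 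Everything up to here is exact bookkeeping plus one application of the triangle inequality.

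The final and hardest step is bounding the residual $\Vert\nabla F(\mathbf{w}_{\mathcal{S}}(t)\vert\mathcal{D}_i(t)) - \nabla F(\mathbf{w}_{\mathcal{S}}(t)\vert\mathcal{D}_{\mathcal{N}}(t))\Vert$ by $\gamma/\sqrt{D_i(t)}$. Here I would model $\mathcal{D}_i(t)$ as (approximately) i.i.d.\ draws, or a uniform random subsample, from the network-wide data population, so that $\nabla F(\mathbf{w}_{\mathcal{S}}(t)\vert\mathcal{D}_i(t))$ is an empirical average of $D_i(t)$ per-datapoint gradients whose mean is the population gradient $\nabla F(\mathbf{w}_{\mathcal{S}}(t)\vert\mathcal{D}_{\mathcal{N}}(t))$. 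Under a bounded per-datapoint gradient-variance assumption, the variance of this sample mean scales like $1/D_i(t)$, so its root-mean-square deviation from the population mean is at most $\gamma/\sqrt{D_i(t)}$, with $\gamma$ absorbing the data-independent variance bound. This is the main obstacle precisely because it is the only non-algebraic step: it requires imposing a statistical model on the local data and invoking a law-of-large-numbers/concentration estimate, in contrast to the purely identity-based reductions preceding it. Collecting the three terms reproduces $\delta_i(t)$ exactly.
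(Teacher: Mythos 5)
Your proposal is correct and follows essentially the same route as the paper's proof: both expand $\zeta(\mathbf{w}_{\mathcal{S}}(t))$ via Definition~\ref{def:e} into the sampled/unsampled split $\frac{D_{\mathcal{N}}(t)-D_{\mathcal{S}}(t)}{D_{\mathcal{N}}(t)}\overline{\nabla F(t)} - C$, apply the triangle inequality to isolate the structural terms, and control the residual $\Vert\nabla F(\mathbf{w}_{\mathcal{S}}(t)|\mathcal{D}_i(t)) - \nabla F(\mathbf{w}_{\mathcal{S}}(t)|\mathcal{D}_{\mathcal{N}}(t))\Vert$ by $\gamma/\sqrt{D_i(t)}$ through the same central-limit-theorem modeling step (viewing the local gradient as an empirical mean of $D_i(t)$ per-datapoint gradients whose population mean is the full-network gradient). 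The only difference is bookkeeping order --- the paper applies the triangle inequality before expanding $\zeta$, whereas you expand $\zeta$ first --- which is immaterial.
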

\begin{proof}
See Appendix~\ref{app:prop1}.
\end{proof}
\normalsize
The above proposition relates each $\delta_i(t)$ to the number of instantaneous data points available at device $i$. 

\begin{figure*}[t]
\includegraphics[width=.78\textwidth]{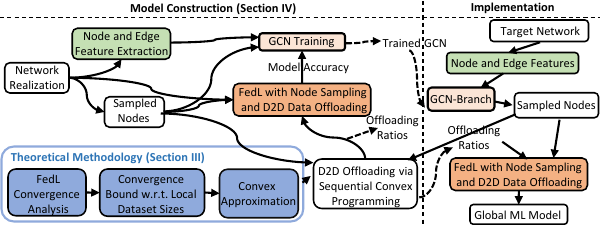} 
\centering
\caption{\color{black}Overview of the joint sampling and offloading methodology developed in Sec.~\ref{s:p1}\&\ref{s:p2}. During model construction, our methodology trains a GCN, using various network realizations and sampled sets of nodes with the data offloading optimization from Sec.~\ref{s:p1}. 
At the implementation stage, the target network uses the GCN-based algorithm developed in Sec.~\ref{s:p2} to obtain a sampled set of devices, which then undergo the D2D data offloading optimization process. 
Finally, we apply the results of the sampling and D2D data offloading processes for FedL, yielding a global ML model after training completion.}
\label{fig:BigPicture}
\vspace{-1mm}
\end{figure*}

\subsection{Offloading as a Sequential Convex Optimization} \label{ss:p1c}

Using the result of~\eqref{eq:fupsilon} to replace the RHS of~\eqref{eq:cl1_result} implies that the FedL loss term in the objective function (i.e.,~\eqref{eq:obj_new}(a)) is proportional to $\frac{1}{T}\sum_{t=1}^{T} \delta_{\mathcal{S}}(t)$, where $\delta_{\mathcal{S}}(t)$ is defined in~\eqref{th1:3} as a sum-of-ratios of $\delta_i(t)$. Considering $\frac{1}{T}\sum_{t=1}^{T} \delta_{\mathcal{S}}(t)$ as the objective in problem $\boldsymbol{\mathcal{P}}$ yields the sum-of-ratios problem in fractional programming~\cite{schaible2003fractional}. 
The scale of existing solvers for the sum-of-ratios fractional programming problem (e.g. \cite{kuno2002branch}) 
are on the order of ten ratios, which corresponds to ten devices in our case. Contemporary large-scale networks that may have hundreds of edge devices~\cite{8373692} therefore cannot be solved accurately or in a time-sensitive manner. 
Motivated by the above fact, we approximate $\delta_{\mathcal{S}}(t) \approx \frac{1}{S} \sum_{i \in \mathcal{S}} \delta_i(t)$. 
Using this with~\eqref{eq:lemma},
{\color{black} we obtain the following approximation for the loss function term in~\eqref{eq:obj_new}(a): 
\vspace{-4mm}

\small
\begin{align} \label{eq:obj_temp}
\widetilde{F}(t) = \bigg[ \underbrace{\left(\frac{D_{\mathcal{N}}(t)-D_{\mathcal{S}}(t)}{D_{\mathcal{N}}(t)}\right)  \overline{\nabla F(t)}}_{(a)} + \frac{1}{\vert\mathcal{S}\vert} \sum_{i \in \mathcal{S}}\underbrace{ \frac{\gamma}{\sqrt{D_i(t)}}}_{(b)} \bigg],
\end{align}
}
\vspace{-3mm}


\normalsize
where term $(a)$ is due to sampling and term $(b)$ is the statistical error from the central limit theorem. Thus, for a known binary vector $\mathbf{x}$ (i.e., a known $\mathcal{S}$)  that satisfies~\eqref{eq:con7}, we arrive at the following optimization problem for the D2D data offloading:
{\color{black}
\begin{align}
\hspace{-0.6mm} (\boldsymbol{\mathcal{P}}_D) :~~ 
& \hspace{-5mm} \underset{\{\boldsymbol{\Phi}(t)\}_{t=1}^{T}}{\min} 
\hspace{-1mm} \frac{1}{T} \sum_{t=1}^T \bigg( \alpha \widetilde{F}(t) 
\hspace{-0.6mm} + \hspace{-0.6mm} \beta \hspace{-0.6mm} \sum_{i \in \mathcal{S}} {E}_{i}^{P}(t) \hspace{-0.6mm} + \hspace{-0.6mm} \gamma \hspace{-1mm} \sum_{k \in \mathcal{N}} {E}_{k}^{Tx}(t) \bigg)
\nonumber \\
&  \textrm{s.t.}~ \eqref{eq:con1}-\eqref{eq:con6}, \eqref{eq:con8}-\eqref{eq:con16}.\nonumber
\end{align} }
Since the number of datapoints at the unsampled devices is fixed for all time, $D_{\mathcal{N}}(t)$ can be expressed as $D_{\mathcal{S}}(t) + D_{\hat{\mathcal{S}}}$, where $D_{\hat{\mathcal{S}}} = \sum_{i \in \hat{\mathcal{S}}} D_i$ is a constant. 
Consequently, both the coefficient of $\overline{\nabla F(t)}$ in term $(a)$ and the entirety of term $(b)$ in \eqref{eq:obj_temp} are decreasing functions of the quantity of data $D_i(t)$ at sampled devices $i \in \mathcal{S}$. 
Furthermore, given $\overline{\nabla F(t)}$, both terms $(a)$ and $(b)$ in~\eqref{eq:obj_temp} are convex, via~\eqref{eq:con1} and~\eqref{eq:con2}, with respect to the offloading variables in Problem $\boldsymbol{\mathcal{P}}_D$. 
The only remaining challenge is then to obtain $\overline{\nabla F(t)}$, which we consider next. 

\textbf{Sequential gradient approximation:}
Obtaining $\overline{\nabla F(t)}$ requires the knowledge of real-time gradients, $\nabla F(\mathbf{w}_{\mathcal{S}}(t)|\mathcal{D}_i(t))$, $\forall i \in \mathcal{S}$, which are unknown a priori. Furthermore, the gradients of the devices are only observed at the global aggregation time instances $t = k\tau$. Motivated by this, we approximate $\overline{\nabla{{F}(t)}}$ for $t \in \{k\tau+1,\cdots,(k+1)\tau\}$, $ k \in \{1,\cdots,K\}$, using the gradients observed at the most recent global aggregation, i.e., $\nabla F(\mathbf{w}_{\mathcal{S}}(k\tau)|\mathcal{D}_i(k\tau))$, $i\in \mathcal{S}$ on which we perform a sequence of corrective approximations.
Specifically, since the average loss $F$ is convex, $\overline{\nabla F(t)}$ is expected to decrease over time. We assume that this decrease occurs linearly and approximate the real-time gradient using the previously observed gradient at the server as $\overline{\nabla F(t)} \approx \overline{\nabla F(k\tau)} / {\alpha^{t-k\tau}_{k+1}}$, $t\in \{k\tau+1,\cdots,(k+1)\tau\}$, $\forall k \in \{1,\cdots,K\}$, where the scaling factor $\alpha_{k+1} \hspace{-.5mm}> \hspace{-.5mm} 1$ is re-adjusted after every global aggregation $k$. Through the re-adjustment procedure, the server receives the gradients and computes the scaling factor for the each aggregation period as $\alpha_{k+1} = \sqrt[\tau]{\overline{\nabla F((k-1)\tau)}/ \overline{\nabla F(k\tau)}}$. 

Given the aforementioned characteristics of terms $(a)$ and $(b)$ in~\eqref{eq:obj_temp}, our proposed iterative approximation of $\overline {\nabla F(t)}$, and the fact that the constraints of $\boldsymbol{\mathcal{P}}_D$ are all affine at each time instance, we can solve this problem as a sequence of convex optimization problems over time. For this, we employ the CVXPY convex optimization software \cite{diamond2016cvxpy}. 




\begin{figure*} 
    \centering
    \includegraphics[width=0.87\textwidth]{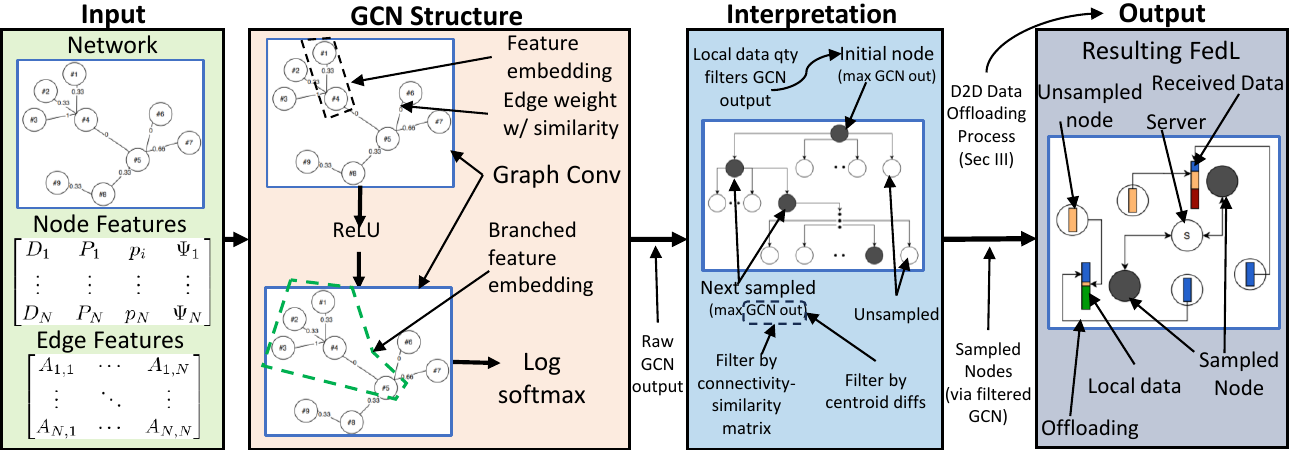}
    \caption{{\color{black}Architecture of our GCN-branch sampling algorithm.
    Given a target network, GCN-branch extracts node and edge features, passing them through two GCN layers, each of which convolves features in local neighborhoods. 
    This process returns raw output probabilities that we filter, based on data quantity, connectivity-similarity matrix, and centroid differences, to obtain a sampled set of nodes in the interpretation stage. 
    With the resulting sampled set, we perform the D2D data offloading process from Sec.~\ref{s:p1}, yielding an output ML model training process for FedL.}}
    \label{fig:gcn_architecture}
    \vspace{-2mm}
\end{figure*}

\section{Smart Device Sampling with D2D Offloading} \label{s:p2}
\noindent 
We now turn to the sampling decisions in problem $\boldsymbol{\mathcal{P}}$, which must be coupled with the offloading solution to $\boldsymbol{\mathcal{P}}_{D}$. After explaining the rationale for our GCN-based approach (Sec.~\ref{ss:sp2_intro}), we will detail our training procedure encoding the network characteristics (Sec.~\ref{ss:gcnmodel}). Finally, we will develop an iterative procedure for selecting the sample set (Sec.~\ref{ss:fullimplement}).

\subsection{Rationale and Overview of GCN Sampling Approach} \label{ss:sp2_intro}
Sampling the optimal subset of nodes from a resource-constrained network to maximize a utility function (in our case, minimizing the ML loss) has some similarity to  0-1 knapsack problem~\cite{martello2000new}. 
In this combinatorial optimization problem, a set of weights and values for $n$ items are given, where each item can be either added or left out to maximize the value of the items within the knapsack subject to a weight capacity. Analogously, our sampling problem aims to maximize FedL accuracy 
while adhering to a sampling budget $S = \sum_{i \in \mathcal{N}}x_i$.
Strategies for the knapsack problem become unsuitable here because the value that each device provides to FedL is difficult to quantify: it depends on the ML loss function, the gradient descent procedure, and the D2D relationships from Sec.~\ref{s:p1}.

To address these complexities, we propose a (separate) ML technique to model the relationship between network characteristics, the sampling set, and the resulting FedL model quality. Specifically, we develop a sampling technique based on active filtering of a Graph Convolutional Network (GCN)~\cite{kipf2016semi}. In a GCN, the learning procedure consists of sequentially feeding an input (the network graph) through a series of graph convolution~\cite{wu2020comprehensive} layers, which generalize the traditional convolution operation into non-Euclidean spaces and thus captures connections among nodes in different neighborhoods.

Our methodology is depicted in Fig.~\ref{fig:gcn_architecture}. GCNs excel at graph-based classification tasks, as they learn over the intrinsic graph structure. 
However, GCNs by themselves have performance issues when there are multiple good candidates for the classification problem~\cite{li2018combinatorial}. 
This holds for our large-scale network scenario, as many high performing sets of sampling candidates can be expected. The data offloading scheme adds another important dimension: a sampled node $i$ may perform poorly when considered in isolation, but it may have high processing capacities $P_i(t)$ and be connected to unsampled nodes $j \in \hat{\mathcal{S}}$ with large quantities of local data and high transfer limits $\Psi_{j,i}(t)$. We address these issues by (i) incorporating the solution from Sec.~\ref{s:p1} into the GCN training procedure, and (ii) proposing \textit{sampling GCN-branch}, a network-based post-processing technique that maps the GCN output to a sampling set by considering the underlying connectivity-similarity matrix.

\begin{figure}[t] 
\centering
\includegraphics[width=0.9\columnwidth]{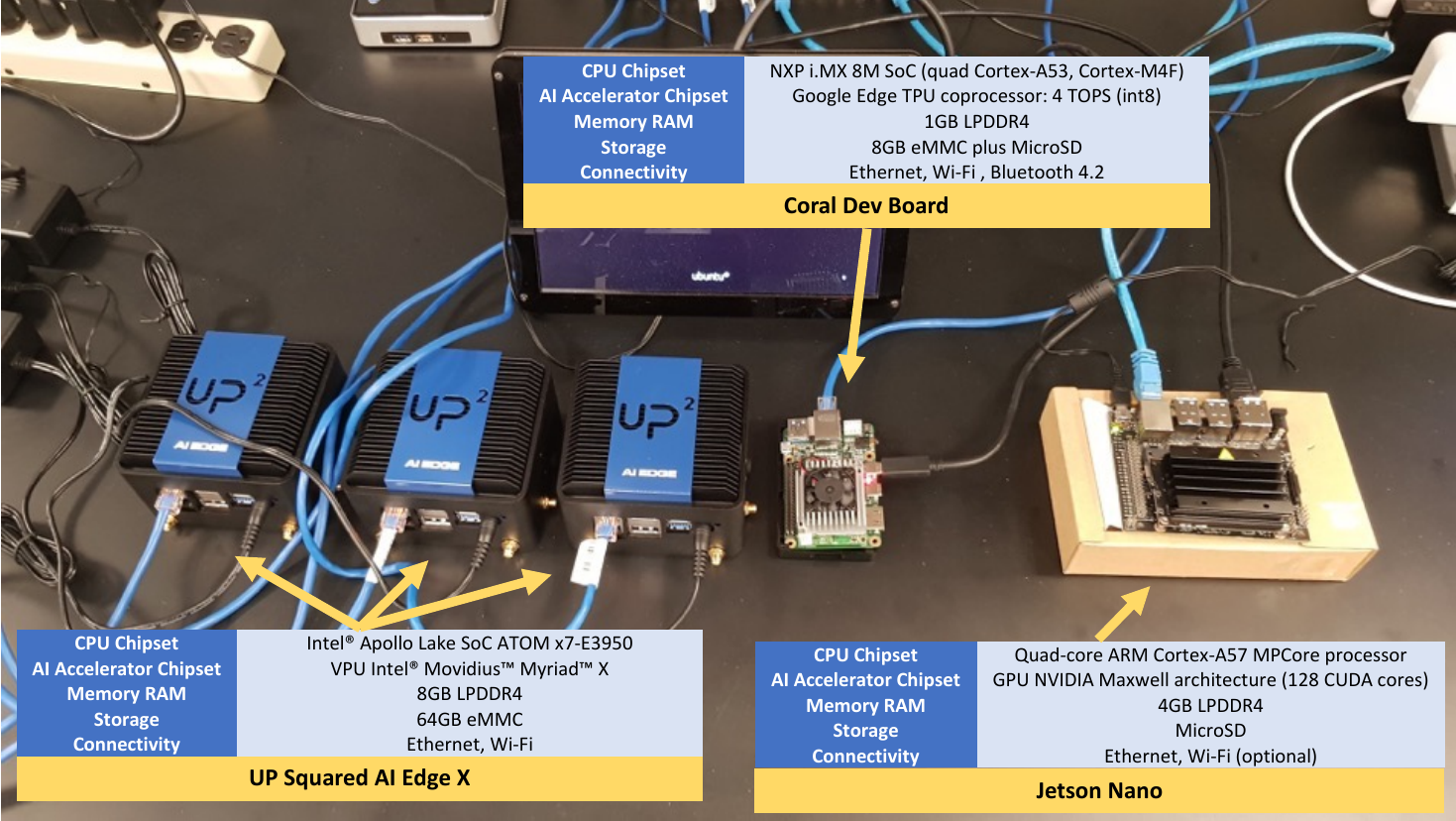}
\caption{IoT testbed used to generate device and link characteristics.}
\vspace{-2mm}
\label{fig:testbed}
\end{figure}

\subsection{GCN Architecture and Training Procedure} \label{ss:gcnmodel}

We consider a GCN function $H(\boldsymbol{\pi},\tilde{\mathbf{A}})$ with two inputs: (i)~$\boldsymbol{\pi} \in \mathbb{R}^{N \times U}$, a matrix of $U$ node features, and (ii) $\mathbf{\tilde{A}} \in \mathbb{R}^{N \times N}$, the augmented connectivity-similarity matrix. The feature vector for each node $i$ is defined as  $\boldsymbol{\pi}_i \triangleq [D_i(0), P_i(0), p_i(0), \theta_i(0)]$, forming the rows of $\boldsymbol{\pi}$, and the augmented connectivity-similarity matrix is defined as $\tilde{\mathbf{A}}\triangleq \boldsymbol{{\Lambda}(0)}+\mathbf{I}_{N}$, where $\mathbf{I}_{N}$ denotes the identity matrix~\cite{wu2020comprehensive}. 
$H$ consists of two graph convolutional layers separated by a rectified linear unit (ReLU) activation layer~\cite{kipf2016semi}, as depicted in Fig.~\ref{fig:gcn_architecture}. The outputs of each layer are defined as:\vspace{-1mm}
\begin{equation} \label{eq:gcn_layer_params}
    \mathbf{H}^{(l)} \triangleq \sigma\left( \tilde{\mathbf{D}}^{-\frac{1}{2}} \tilde{\mathbf{A}} \tilde{\mathbf{D}}^{-\frac{1}{2}} \mathbf{H}^{(l-1)} \mathbf{Q}^{(l)}\right),~l \in \{1,2\},
    \vspace{-1mm}
\end{equation}
where $\tilde{\mathbf{D}}$ is the degree matrix of $\tilde{\mathbf{A}}$, $\mathbf{Q}^{(l)}$ denotes the trainable weights for the $l$-th layer, and $\sigma$ represents ReLU activation. Note that $\mathbf{H}^{(0)} = \boldsymbol{\pi}$, $\mathbf{Q}^{(1)} \in \mathbb{R}^{U \times O}$, and $\mathbf{Q}^{(2)}\in \mathbb{R}^{O \times 1}$, where $O$ is the dimension of the second layer. Finally, log-softmax activation is applied to $\mathbf{H}^{(2)} \in \mathbb{R}^{N}$ to convert the results into a vector of probabilities, i.e., $\boldsymbol{\Gamma} \in [0,1]^{N}$, representing the likelihood of each node belonging to the sampled set.

\textbf{GCN training procedure:} To train the GCN weights, we generate a set of sample network and node data realizations $e = 1,\cdots,E$ with the properties from Sec.~\ref{sss:devices}\&\ref{sss:graph}. For each realization, we calculate the matrices $\boldsymbol{\pi}_e$ and $\mathbf{\tilde{A}}_e$ corresponding to the inputs of the GCN. Then, for each candidate sampling allocation $\mathbf{x}^s_e=[({x}^s_e)_i]_{1\leq i\leq N}$ (with $\sum_i ({x}^s_e)_i = S$), we solve $\boldsymbol{\mathcal{P}}_{D}$ from Sec.~\ref{s:p1} to obtain the offloading scheme, and then determine the loss of FedL resulting from model training and D2D offloading. Among these, we choose the $\mathbf{x}^{\star}_e$ that yields the smallest objective to be the target GCN output. The collection of $[(\boldsymbol{\pi}_e, \boldsymbol{\tilde{A}}_e, \mathbf{x}^{\star}_e)]_{e=1}^{E}$ form the training samples for the GCN.


As the number of devices $N$ increases, the number of choices that will be considered for the sampled set increases combinatorially as $\binom{N}{S}$. An advantage of this GCN procedure is that it can be adapted to networks of varying size: once trained on a set of realizations for tolerable-sized values of $N$, the graph convolutional layer weights $\mathbf{Q}^{(l)}$, $l \in \{1,2\}$, can be applied to the desired network of varying size by shifting through the graph. Our obtained performance results in Sec.~\ref{s:numRes} verify this experimentally.

\subsection{Offloading-Aware Smart Device Sampling} \label{ss:fullimplement} 

\begin{algorithm}[t!] 
   \caption{{\color{black}Summary of Key Steps in Overall Methodology}}
   \label{alg:ovr}
   \begin{algorithmic}[1] 
        {\color{black}
        \STATE // GCN model construction - \textbf{Sec.~\ref{ss:gcnmodel}} 
        \STATE \textit{Step 1:} Generate randomized network realizations, each having graph $G_z$ with $z$ corresponding to a unique realization, for FedL. 
        \STATE \textit{Step 2:} For each realization, find the best sets of sampled devices, e.g., via brute force search, while factoring in optimized data offloading via $(\boldsymbol{\mathcal{P}}_D)$ in \textbf{Sec.~\ref{ss:p1c}}.  
        \STATE \textit{Step 3:} Extract node features (i.e., $P_i, p_i, D_i, \Psi_i$ $\forall i \in \mathcal{N}$) and edge features (i.e., $\psi_{i,j}, A_{i,j}, \lambda_{i,j}$ $\forall i,j \in \mathcal{N}, i \neq j$) for the network realizations. 
        \STATE \textit{Step 4:} Use the results of Step 2 and 3 as the ground truth to train the GCN's parameters, $\boldsymbol{Q}^{(l)}$,~$l \in \{1,2\}$ per~\eqref{eq:gcn_layer_params} in \textbf{Sec.~\ref{ss:gcnmodel}}.
        
        \STATE 
        
        \STATE // Implementation and operation for a given network - follows \textbf{Sec.~\ref{ss:fullimplement}}
        \STATE \textit{Step 1:} Extract node features (i.e., $P_i, p_i, D_i, \Psi_i$ $\forall i \in \mathcal{N}$) and edge features (i.e., $\psi_{i,j}, A_{i,j}, \lambda_{i,j}$ $\forall i,j \in \mathcal{N}, i \neq j$) for a target network, $G = \{\mathcal{N}, \mathcal{E}(t)\}$.
        \STATE \textit{Step 2:} Pass the target network $G = \{ \mathcal{N},\mathcal{E}(t) \}$ and its features (i.e., $P_i, p_i, D_i, \Psi_i, \psi_{i,j}, A_{i,j}, \lambda_{i,j}$ $\forall i,j \in \mathcal{N}, i \neq j$) through the trained GCN, which has trained weights, $\boldsymbol{Q}^{(l)}$,~$l \in \{1,2\}$, in the form of~\eqref{eq:gcn_layer_params} as a result of the procedure in \textbf{Sec.~\ref{ss:gcnmodel}}. 
        \STATE \textit{Step 3:} Perform GCN-branch to obtain the set of sampled devices $\mathcal{S}$, described in \textbf{Sec.~\ref{ss:fullimplement}}. 
        \STATE \textit{Step 4:} Perform the D2D data offloading and resource optimization process of $(\boldsymbol{\mathcal{P}}_D)$ from \textbf{Sec.~\ref{ss:p1c}} to determine the offloading ratios  $\boldsymbol{\Phi}(t)$ for the sampled set $\mathcal{S}$, given some optimization regime $\alpha$, $\beta$, and $\gamma$. 

        \STATE

        \STATE // Operation with device sampling and D2D data offloading - \textbf{Sec.~\ref{s:numRes}}
        \STATE \textit{Step 1:} Network uses GCN-branch to select devices $i \in \mathcal{S}$ as the sampled nodes and $(\boldsymbol{\mathcal{P}}_D)$ to obtain the offloading ratios $\boldsymbol{\Phi}(t)$. 
        \STATE \textit{Step 2:} Sampled devices $i \in \mathcal{S}$ are initialized with ML model parameters $\textbf{w}_i^0$, and locally update their ML model parameters, $\mathbf{w}_i(t)$, via gradient descent in~\eqref{eq:gd}. 
        \STATE \textit{Step 3:} At each training iteration $t$, sampled devices $i \in \mathcal{S}$ receive data $\boldsymbol{\Phi}_{j,i}(t) D_j(t)$ from nearby devices $j \in \mathcal{N}, j \notin \mathcal{S}$. 
        \STATE \textit{Step 4:} At each $k$-th aggregation, the server aggregates the ML model parameters 
        $\mathbf{w}_{\mathcal{S}}(k\tau) = \frac{\sum_{i \in \mathcal{S}}  \Delta_i(k\tau) \mathbf{w}_i(k\tau)}{\sum_{i \in \mathcal{S}} \Delta_i(k\tau)}$,
        and then synchronizes the sampled devices' ML model parameters to $\mathbf{w}_{\mathcal{S}}(k\tau)$. 
        \STATE \textit{Step 5:} Training process continues until $t = T$.
        }
  \end{algorithmic}
\end{algorithm}

Given any network graph, our procedure must solve the sampling problem at the point of FedL initialization, i.e., $t=0$.
With the trained GCN in hand, we obtain $\boldsymbol{\pi}$ and $\tilde{\mathbf{A}}$ for the target network and calculate $\boldsymbol{\Gamma} = H(\boldsymbol{\pi},\tilde{\mathbf{A}})$, $\boldsymbol{\Gamma}=[\Gamma_i]_{1\leq i\leq N}$. 
Given this output, our sampling GCN-branch algorithm populates the set $\mathcal{S}$ as follows. 
Let $\mathcal{N}_p \subset \mathcal{N}$ be the subset of nodes in the 95th percentile of initial data quantity. 
Starting with $\mathcal{S} = \emptyset$, the first node is added according to $S = S \cup \{s_1\}$, where $s_1 = \argmax_{i \in \mathcal{N}_p} \Gamma_i$. 
{\color{black} In this way, the first node sampled is the device with the highest GCN probability among nodes with large local data generation.} 
To choose subsequent sampled nodes, the algorithm performs a recursive branch-based search on the initial connectivity-similarity matrix $\boldsymbol{\Lambda}(0)$ for nodes with the highest sampling probabilities and the smallest aggregate data similarity, as measured by maximin centroid difference, to the previously sampled nodes.
{\color{black} In doing so, our algorithm filters the search space for the GCN to yield subsequent nodes with the most contribution to the existing sampled set of nodes.
We develop this procedure for two main reasons. Firstly, the GCN scores devices individually, so two devices with high GCN probabilities may not necessarily yield a cohesive fit. Secondly, existing work~\cite{li2018combinatorial,wu2020comprehensive} has shown that a GCN individually may struggle to return a single
optimal set in scenarios where multiple optimal sets are feasible, which is the case in large-scale networks.}

Formally, we choose the $n$-th node addition as $\mathcal{S} = \mathcal{S} \cup \{s_n\}$, where $s_{n} =\argmax_{i \in \mathcal{R}_{s_{n-1}}} \Gamma_i$ 
{\color{black} with $\mathcal{R}_{s_{n-1}}$ denoting the unsampled nodes above the 95th percentile of link dissimilarity to $s_{n-1}$ (i.e., based on $\boldsymbol{\Lambda}(0)$) and the 80th percentile of maximin centroid difference relative to all the nodes in $\mathcal{S}$ thus far.}
In this way, our branch algorithm relies on the GCN to decide which branch the sampling scheme will follow given its current sampled nodes (visualization in Fig.~\ref{fig:gcn_architecture}), 
so that subsequent selections are more likely to contain nodes with (i) different 
data distributions while (ii) leading to new neighborhoods that can contribute to the current set. Once the sampled set $\mathcal{S}$ is determined, the offloading is scheduled for $t = 0,...,T$ per the solver for $\boldsymbol{\mathcal{P}}_{D}$ from Sec.~\ref{s:p1}.

\textbf{Summary of methodology:} Fig.~\ref{fig:BigPicture} summarizes our methodology developed in Sec.~\ref{s:p1}\&\ref{s:p2} for solving $\boldsymbol{\mathcal{P}}$. 
{\color{black} Algorithm~\ref{alg:ovr} also provides an overview of the key steps in the process, focusing specifically on the steps (i) to build the GCN model, (ii) for implementation on a target network, and (iii) for FedL operation on the target network.}
The sequential convex optimization for offloading (Sec.~\ref{s:p1}) is embedded within the GCN-based sampling procedure (Sec.~\ref{s:p2}). Once the model is trained on sample network realizations, it is applied to the target network to generate the $\mathcal{S}$ and $\boldsymbol{\Phi}(t)$ for FedL.


\vspace{-1mm}
\section{Experimental Evaluation}\label{s:numRes}



\subsection{Setup and Experimental Procedure} \label{ss:setup}
\subsubsection{Network characteristics via wireless testbed}
We employed our IoT testbed in Fig.~\ref{fig:testbed} to obtain device and communication characteristics. It consists of Jetson Nano, Coral Dev Board TPU, and UP Squared AI Edge boards configured to operate in D2D mode.
We used Dstat~\cite{dstat} to collect the device resources and power consumption. We map the measured computing resources (in CPU cycles and RAM) and the corresponding power consumption (in mW) at devices to the costs and capacities in our model by calculating the Gateway Performance Efficiency Factor (GPEF)~\cite{morabito2018legiot}. Specifically, to determine the processing costs $p_i(t)$, we measured the GPEF of the devices running gradient iterations on the MNIST dataset~\cite{yann}. For the processing capacities $P_i(t)$, we pushed the devices to 100\% load and measured the GPEF. 
We initialized the devices at 25\%-75\% loads, and treated the available remaining capacity as the receive buffer parameter $\theta_i(t)$. 

For the transmission costs, we measured GPEF spent on D2D offloading over WiFi. Our WiFi links, when only devoted to D2D offloading, consistently saturated at 12 Mbps. To simulate the effect of external tasks, we limit available bandwidth for D2D to 1, 6, and 9 Mbps. We then calculated the transmission resource budget for devices as transfer limits $\Psi_{i}(t)$, and modelled unit transfer costs $\psi_{i,j}(t)$ as normalized D2D latency.

\begin{figure}[!t]
\centering
\includegraphics[width = 0.49\textwidth]{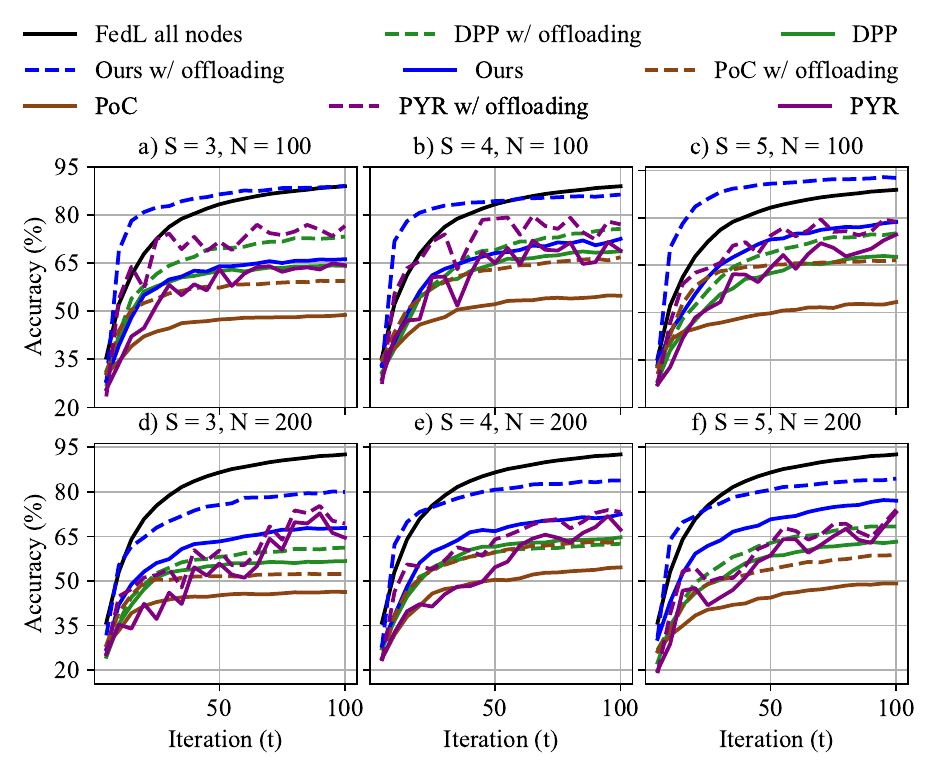}
\caption{{\color{black} Testing classification accuracies over training iterations on MNIST obtained by the sampling schemes with and without offloading, and by FedL using all nodes, for different sampled sizes ($S$) and nodes ($N$). For $S > 3$, our proposed sampling methodology with offloading consistently obtains a wide margin of improvement over all schemes.}}
\label{fig:mnist}
\vspace{-1mm}
\end{figure}

\begin{table}[t]
\caption{{\color{black} Global aggregations required by each scheme to reach a certain percentage of FedL model accuracy on MNIST (69\%) and F-MNIST (41\%) with $S = 6$ and varying $N$. Our sampling methodology with offloading consistently obtains the fastest training time. Dashes indicate that a method was unable to reach the accuracy threshold.}}
\label{tab: mnist}
{\footnotesize
\begin{tabularx}{0.48\textwidth}{c *{6}{Y}}
\toprule[.2em]
\multirow{2}{*}{\bf{Sampling Method}} & \multicolumn{3}{c}{\bf{Devices (MNIST)}} & \multicolumn{3}{c}{\bf{Devices (F-MNIST)}} \\
\cmidrule(lr){2-4} \cmidrule{5-7}
& \bf{600} & \bf{700} & \bf{800} & \bf{600} & \bf{700} & \bf{800}\\
\midrule 
Ours w/ offload & 4 & 5 & 3 & 3 & 3 & 2 \\
DPP w/ offload & 12 & 12 & 9 & 15 & 15 & 16 \\
PoC w/ offload & \rule{5 mm}{0.2pt} & 14 & \rule{5 mm}{0.2pt} & 13 & 17 & 13 \\
PYR w/ offload & \rule{5 mm}{0.2pt} & 17 & 17 & \rule{5 mm}{0.2pt} & \rule{5 mm}{0.2pt} & \rule{5 mm}{0.2pt}\\
\midrule 
Ours & 10 & 12 & 14 & 10 & 12 & 13\\
DPP & 13 & 16 & 10 & 16 & 18 & 20 \\
PoC & \rule{5 mm}{0.2pt} & \rule{5 mm}{0.2pt} & \rule{5 mm}{0.2pt} & 18 & \rule{5 mm}{0.2pt} & \rule{5 mm}{0.2pt} \\
PYR & \rule{5 mm}{0.2pt} & 19 & 17 & \rule{5 mm}{0.2pt} & \rule{5 mm}{0.2pt} & \rule{5 mm}{0.2pt}\\
\bottomrule
\end{tabularx}
}
\vspace{-1mm}
\end{table}

\begin{figure}[!t]
\centering
\includegraphics[width = 0.49\textwidth]{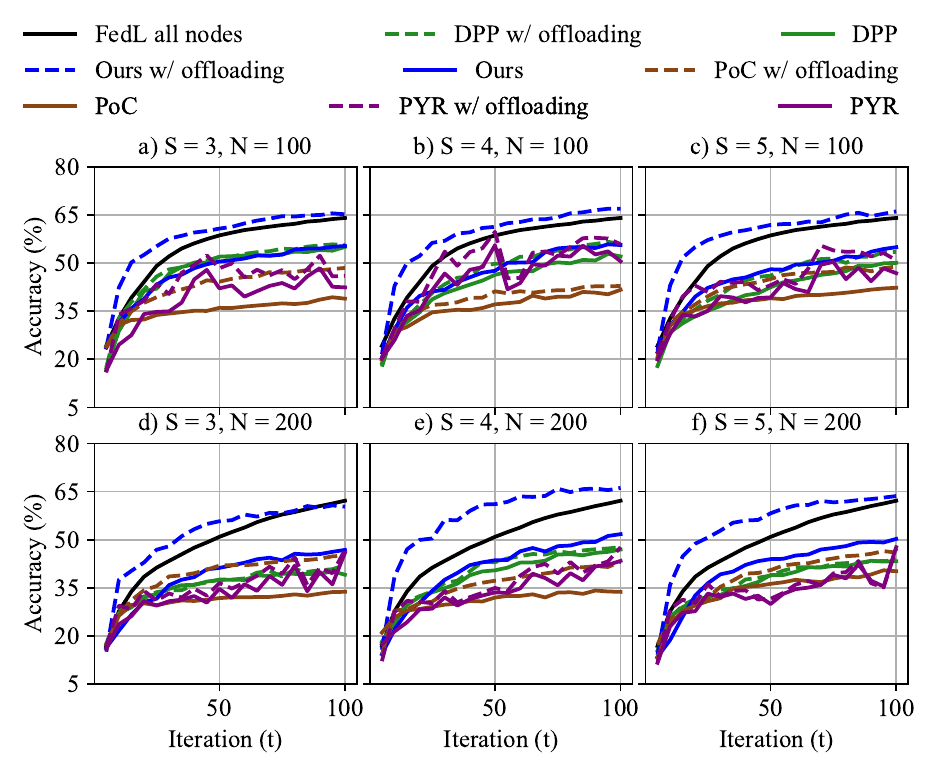}
\caption{{\color{black}Testing classification accuracies on F-MNIST for the same setup as in Fig.~\ref{fig:mnist}. The results are consistent with the MNIST dataset. The wide margin of improvement obtained by our proposed sampling scheme with offloading vs. without emphasizes the benefit of considering these two aspects jointly for FedL optimization.}}
\label{fig:fmnist}
\vspace{-1mm}
\end{figure}

\subsubsection{Datasets and large-scale network generation} \label{sss:datasets}
For FedL training, we use MNIST and Fashion-MNIST (F-MNIST)~\cite{fmnist} image classification datasets. We consider a CNN predictor composed of two convolutional layers with ReLU activation and dropout. The devices perform $\tau = 5$ rounds of gradient descent with a learning rate $\eta = 0.01$. 
Following \cite{wang2021network}, we generate network topologies with $N=100$ to $800$ devices using Erd{\"o}s–Rényi graphs with link formation probability (i.e., $A_{i,j} = 1$) of $0.1$. To produce local datasets across the nodes that are both overlapping and non-i.i.d, the datapoints at each node are chosen uniformly at random with replacement from datapoints among three labels (i.e., image classes). 
Differentiating the labels between devices captures dataset heterogeneity (i.e., from different devices collecting data from different labels). 
The number of initial datapoints $D_i(0)$ at each device follows a normal distribution with mean $\mu=(D_{\mathcal{N}}(0))N^{-1}$ and variance $\sigma^2 = 0.2\mu$. We further estimate the initial similarity weights $\lambda_{i,j}(0)$ based on the procedure discussed in Sec.~\ref{sss:graph}. 


For the GCN-based sampling procedure, we train the model on small network realizations of ten devices. We consider sampling budgets of $S = 3$ to $6$, with corresponding training samples $E$ for each case. We save the resulting graph convolutional layer weights $\mathbf{Q}^{(1)}$ and $\mathbf{Q}^{(2)}$ for each choice of $S$ and reapply them on the larger target networks.


\subsection{Results and Discussion} \label{ss:sim_baseline}

In the following experiments, we compare our methodology to several baseline sampling and offloading schemes. 
{\color{black} The four sampling strategies considered are \textit{Ours}, data proportional probabilities (\textit{DPP})~\cite{li2019convergence,karimireddy2020scaffold}, power-of-choice (\textit{PoC})~\cite{pmlr-v151-jee-cho22a}, and PyramidFL (\textit{PYR})~\cite{li2022pyramidfl}.}
\textit{DPP} is a random device sampling technique based on the quantity of local data at network devices, with larger datasets incurring a higher probability of sampling for FedL aggregations. On the other hand, \textit{PoC} is a sampling technique based on the instantaneous training losses across network devices, with larger losses equating to a higher probability of being part of the sampled device set.
{\color{black}\textit{PYR} is a time-varying sampling technique based on a combination of exploration and exploitation across network devices, with exploitation referring to the best performing nodes and exploration representing careful selection of other nodes.
For \textit{PYR}, we use an even ratio of $0.5$ for both exploration and exploitation.}
Each of the four sampling schemes is investigated both \textit{with} and \textit{without} data offloading. 
For our sampling methodology, we employ our corresponding data offloading methodology. For \textit{DPP}, \textit{PoC}, and \textit{PYR} sampling methodologies, we perform a greedy offloading that minimizes the communication resource costs while maintaining the same nominal offloaded data quantity as our methodology. 
A baseline of FedL with no sampling (i.e., all nodes active) and no offloading is also included.

{\color{black}In the following, we will first investigate classification performances, convergence speeds, and resource utilization of the various sampling methodologies. Then, we will qualitatively examine characteristics of our optimization formulation. Finally, we further examine different optimization regimes (i.e., combinations of $\alpha$, $\beta$, and $\gamma$ on the performance of our methodology), the impact of time-varying links, and integration with FedDrop~\cite{wen2022federated} in the Appendices.}

\begin{figure}[t]
\centering
\includegraphics[width=0.96\linewidth]{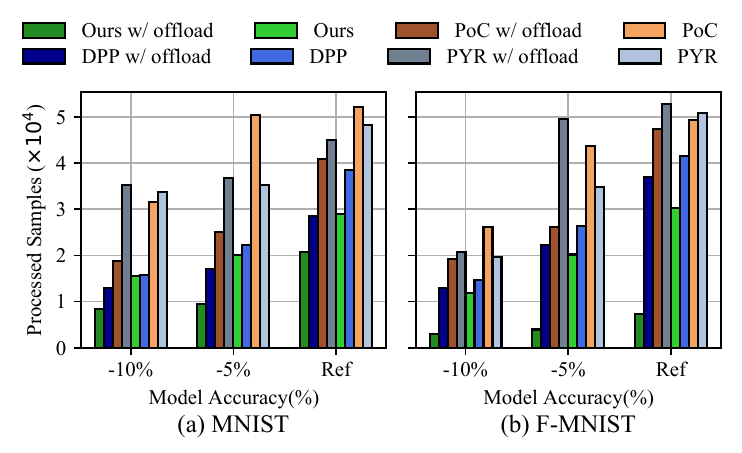}
\caption{{\color{black}Number of samples processed by the schemes with and without offloading to reach within a certain percentage of a reference testing accuracy (Ref, 69\% for MNIST and 41\% for F-MNIST). Our smart sampling with offloading methodology scales the best in terms of data needed for ML model training.}}
\label{fig:bars}
\vspace{-1mm}
\end{figure}

\subsubsection{Model accuracy}
Figs.~\ref{fig:mnist} and~\ref{fig:fmnist} show FedL accuracy for different sampling baselines on two popular ML datasets in six different combinations ($S = 3, 4, 5$ and $N = 100, 200$).
{\color{black} For these experiments, our proposed methodology solves $(\boldsymbol{\mathcal{P}}_D)$ for the data offloading and resource optimization component using $\alpha = 100$, $\beta = 0.001$, and $\gamma = 0.01$ for MNIST or $\gamma = 0.006$ for FMNIST.}\footnote{These values of $\alpha$, $\beta$, and $\gamma$ enable the three objective function terms of ($\boldsymbol{\mathcal{P}}_D$) to all have the same order of magnitude, i.e., $\mathcal{O}(10^{2})$.} 
{\color{black} We further compare the above $\alpha$, $\beta$, and $\gamma$ combination, which we term as the balanced regime, versus the high energy costs regime for $\alpha$, $\beta$, and $\gamma$ in Appendix~\ref{app:high_nrg}, which increases $\beta$ and $\gamma$ by an order of magnitude for both datasets.}

{\color{black}For the experiments involving data offloading, \textit{Ours w/ offloading} consistently demonstrates at least a $5\%$ accuracy improvement over the baseline methodologies for MNIST (Fig.~\ref{fig:mnist}) with a maximum of $13\%$ accuracy improvement in Fig.~\ref{fig:mnist}c).} Similarly on F-MNIST (Fig.~\ref{fig:fmnist}), our methodology is able to continue outperforming the baseline sampling methodologies for various $S$ and $N$ combinations.  
The final accuracy attained by \textit{Ours w/ offloading} is even able to match or outperform \textit{FedL all nodes} in some cases.
Our joint sampling and data offloading methodology is able to achieve better performance due to two main reasons: (i) it minimizes the data skew resulting from unbalanced label frequencies, and (ii) it ensures higher quality of local datasets at sampled nodes, which reduces bias caused by multiple local gradient descents. 
Without offloading, our sampling methodology is also able to either outperform or match the accuracies obtained by the \textit{DPP}, \textit{PoC}, and \textit{PYR} sampling strategies.
Additionally, our method with offloading is able to offer consistent improvements over sampling without offloading in most cases, whereas the effects of data offloading are smaller or even detrimental for the baseline methodologies. This emphasizes the importance of designing the sampling and offloading schemes for FedL jointly.


\begin{figure}[t]
\centering
\includegraphics[width=0.96\linewidth]{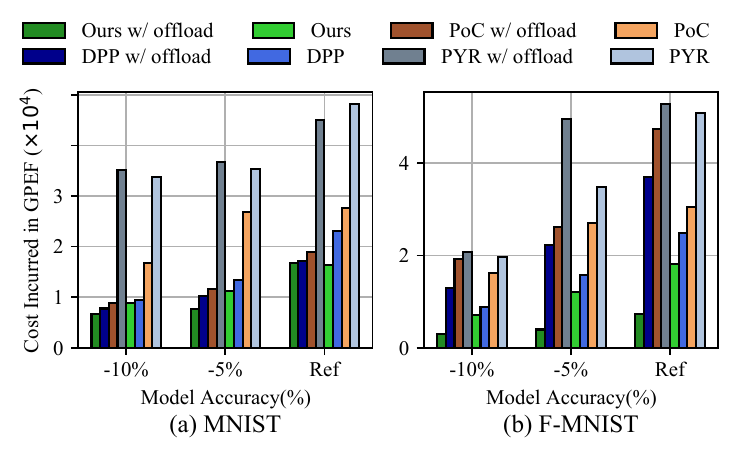}
\caption{{\color{black}The total associated data processing and communication costs measured in GPEF linked to Fig~\ref{fig:bars}. Our proposed sampling methodology with and without offloading obtain some of the lowest costs among all the baseline comparisons regardless of dataset.}}
\label{fig:costs}
\vspace{-1mm}
\end{figure}

\subsubsection{Model convergence speed} We next compare the convergence speeds of our methodology to the other schemes in terms of the number of global aggregations needed to reach specific accuracy thresholds.  
Table~\ref{tab: mnist} compares the convergence speeds on MNIST (to reach 69\%) and F-MNIST (to reach 41\%), respectively, for $N = 600, 700, 800$ and $S = 6$, and uses a horizontal line to indicate when a methodology is unable to reach the specific accuracy thresholds. We investigate sampling methodologies with and without data offloading. 

{\color{black} For methodologies involving data offloading, our method obtains the fastest convergences rates. 
Specifically, our method is at least 58\% and 80\% faster, in terms of global aggregations needed to reach the accuracy threshold, than the other baselines on MNIST and F-MNIST respectively.} 
Furthermore, our method is only one to consistently be able to reach the accuracy thresholds. 
For example, while the baseline \textit{PoC w/ offload} is able to reach the threshold for MNIST with $N=700$, it fails to do for $N=600$ and $N=800$ on the same dataset. 


Even without data offloading, our sampling methodology is still able to consistently reach the accuracy thresholds for both MNIST and F-MNIST. 
By contrast, both \textit{DPP} and \textit{PoC} baselines are unable to consistently obtain the accuracy thresholds for F-MNIST.
Overall, we see that our joint sampling and offloading methodology obtains faster training speeds than the other methods. 
Enabling offloading is also seen to improve the convergence speeds of each sampling scheme; in fact, without offloading, several baseline cases fail to reach the given percentage of the FedL baseline.

\begin{figure}[t]
\centering
\includegraphics[width = 0.45\textwidth]{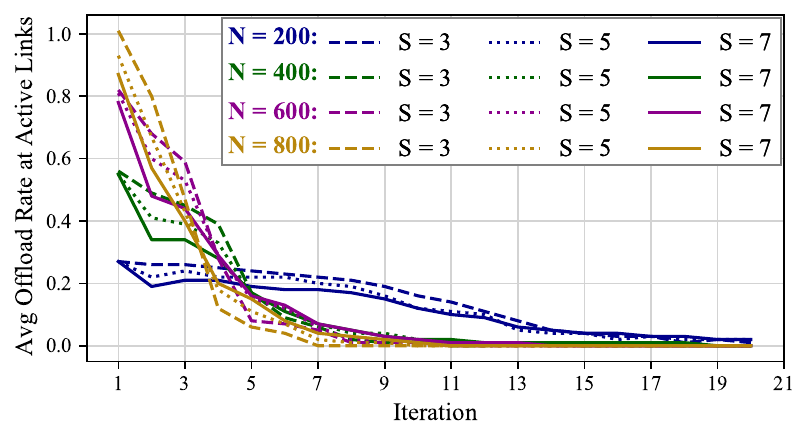}
\caption{The average data offloading rates at active D2D links while varying network sizes and sampling quantities. Larger networks exhibit higher initial average offloading rates and correspondingly have data offloading rates that decay faster in the FedL training process. }
\label{fig:tr_optim_main}
\vspace{-1mm}
\end{figure}

\subsubsection{Resource utilization} Finally, we compare the resource utilization for the different schemes in terms of the total data processed and energy used (measured by GPEF consumption) across a sampled set $S$ of 6 nodes in a network of $N=700$ edge devices. 
Fig.~\ref{fig:bars} depicts the total data processed while Fig.~\ref{fig:costs} shows the corresponding energy costs in GPEF. 
Both figures compare the resource consumption (data processed or energy used) that sampling methodologies with and without offloading incur in order to reach specific accuracy thresholds on both MNIST and F-MNIST. 
We see that our sampling methodology (both with and without offloading) requires fewer data to be processed and thus results in less energy use relative to the baselines for various accuracy thresholds. 

\begin{figure*}[t]
\centering
\includegraphics[width = 0.95\textwidth,height=5.5cm]{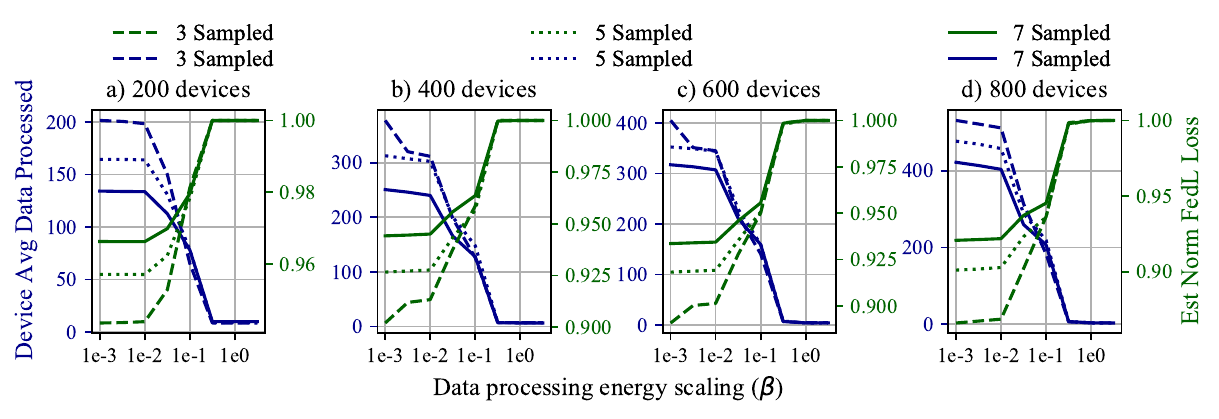} 
\caption{Average data processed per sampled device when varying the data processing energy scaling, $\beta$  from~\eqref{eq:obj_new}. 
As $\beta$ increases, networks are disincentivized to perform costly ML training, which leads to lower average data processed per sampled device and correspondingly increases in the FedL loss. 
In addition, as networks grow in size, the average data processed per sampled device increases as more cost-effective D2D links become available.}
\label{fig:p_optim_main}
\vspace{-4mm}
\end{figure*}

\begin{figure*}[t]
\centering
\includegraphics[width = 0.96\textwidth,height=5cm]{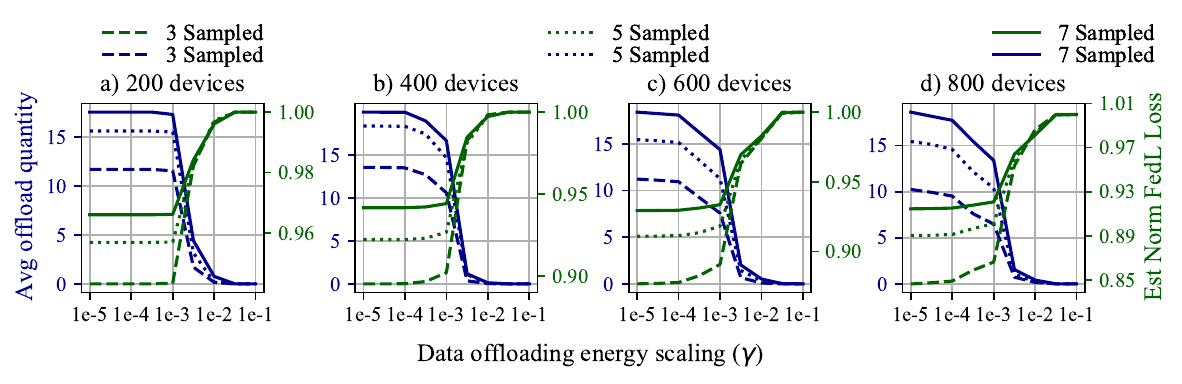} 
\caption{Average data offloaded from unsampled to sampled device when varying the data offloading energy scaling, $\gamma$ from~\eqref{eq:obj_new}. 
As $\gamma$ increases in value, D2D links become more expensive to use, leading to an expected decrease in average offloaded data quantity and a corresponding growth in the FedL loss term. 
The effect of sampling size can also be seen to influence the average data offloading quantity, with smaller sampling quantities yielding lower average data offloading rates as there are physically less links from unsampled to sampled devices.}
\label{fig:tx_optim_main}
\vspace{-4mm}
\end{figure*}

Specifically, on the processed data side (Fig.~\ref{fig:bars}), \textit{Ours w/ offload} processes the least amount of training data across all cases. 
{\color{black}
As the reference accuracy level increases, our methodology consistently requires fewer datapoints compared to the other methods (over 60\% fewer on average), emphasizing its ability to get the most value from each processed datum.}
This emphasizes the benefits of well-designed data offloading in reducing the computational burden across the network as a whole, which is especially valuable in IoT and mobile edge applications. 
Furthermore, \textit{Ours} without offloading requires fewer processed data than all other methods under evaluation aside from \textit{Ours w/ offload}, which highlights the resource efficiency of our sampling methodology on its own. 

Our methodologies are able to maintain their advantages on the energy costs side (Fig.~\ref{fig:costs}). Similar to Fig.~\ref{fig:bars}, our methodology with and without data offloading incurs the least amount of energy use as measured in terms of GPEF. 
{\color{black} On average, our method uses less energy than the baseline methods on MNIST and significantly fewer energy than the baseline methods on FMNIST.}
While \textit{Ours w/ offload} processed the least amount of training data in Fig.~\ref{fig:bars}, we see that, for some cases, the energy costs incurred by it are fairly similar to those total costs incurred by \textit{Ours} without data offloading in Fig.~\ref{fig:costs}. 
So even though \textit{Ours w/ offload} may process fewer data and thus have a lower data processing cost, it seems that the data offloading costs incurred over D2D links can, in specific scenarios, balance out the energy savings from fewer processed data for our methodology.

These experiments demonstrate that our joint optimization method exceeds baseline performances in terms of model accuracy, convergence speed, and resource utilization.
Next, we experimentally characterize the behavior of the optimization formulation $(\boldsymbol{\mathcal{P}})$ by investigating its responses over training iterations, and the two energy scaling variables, $\beta$ and $\gamma$.



\subsubsection{Optimization characterization}
We characterize the behavior of our optimization formulation $(\boldsymbol{\mathcal{P}})$ in three key aspects: (i) its behavior over time as the FedL process converges, (ii) its sensitivity to the data processing energy use (scaled by $\beta$), and (iii) its responses to the data offloading energy use (scaled by $\gamma$). For all of these experiments, we investigate on networks of $N \in \{200,400,600,800\}$ nodes while sampling $S \in \{3,5,7\}$ devices. In addition, these experiments all use $\alpha=100$, $\beta=0.001$, and $\gamma=0.001$ unless stated otherwise. 

First, we characterize the behavior of $(\boldsymbol{\mathcal{P}})$ with respect to training iterations in Fig.~\ref{fig:tr_optim_main}. 
Here, Fig.~\ref{fig:tr_optim_main} highlights the average offloading rate at active links (i.e., links that have a non-zero data offloading rate). 
One of the major effects shown in Fig.~\ref{fig:tr_optim_main} is device/link saturation, in which D2D links gradually cease to be used for data offloading. The main reason for this is that, over time, sampled devices gather enough data from nearby willing unsampled devices that they no longer have any remaining data processing capabilities (as described in~\eqref{eq:con13}). 
This effect becomes more noticeable as the edge network grows in size, which leads to more D2D connections from unsampled to sampled devices and thus faster device/link saturation. 
For instance in Fig.~\ref{fig:tr_optim_main}, the average data offloading rate at active D2D links rapidly decreases (in less than 10 iterations) from more than $0.8$ to $0$ for $N=800$ regardless of sampled device set size, while for $N=200$, the average data offloading rate requires roughly 20 iterations to approach $0$ for all $S \in \{3,5,7\}$. 

More D2D connections among devices also leads to higher initial average data offloading rates among active D2D links in Fig.~\ref{fig:tr_optim_main}. The two reasons for this are: (i) more links leads to more valuable data offloading opportunities and thereby more data offloading, and (ii) simulation setup in Sec.~\ref{sss:datasets} in which the datasets for larger networks are initialized at smaller sizes in order to form a partition of the full ML dataset.


Next, we investigate the sensitivity of $(\boldsymbol{\mathcal{P}})$ to $\beta$ in Fig.~\ref{fig:p_optim_main}. 
As we vary $\beta$ from $0.001$ to $10$, the cost of processing data becomes more expensive relative to the FedL loss term in $(\boldsymbol{\mathcal{P}})$ and~\eqref{eq:obj_new}. 
Initially, for small $\beta$ in $N=200$, the cost of processing data is essentially negligible and the average data processed is at a stable point until the $\beta$ scaling reaches a point that is comparable in value to the FedL loss term in~\eqref{eq:obj_new}. 
Further increases in $\beta$ then lead to reductions in the average data processed per device until nearly no data is processed, and, correspondingly, the estimated and normalized FedL loss increases. 

Fig.~\ref{fig:p_optim_main} also highlights some of the effects of large-scale networks and sampled device set size. 
Here, as the networks grow in size from $200$ to $800$ devices, we see that the average data processed per sampled device is increasing, specifically from under $200$ data per device when $N=200$ to over $400$ data per device when $N=800$. This is because larger networks have more links, and therefore more valuable data offloading, which leads to more D2D offloading and more data processed. 
On the other hand, as the sampled device set size increases from $S=3$ to $S=7$, the opposite effect occurs as the average data processed per sampled device decreases. More sampled devices not only leads to more D2D data offloading opportunities, but also enables the unsampled devices to spread out their data to more sampled devices. 
As a result, the burden of data processing can be shared among more sampled devices, thus leading to the decrease in average data processed per sampled device. 



Finally, we observe the responsiveness of $(\boldsymbol{\mathcal{P}})$ to changes in the value of D2D offloading costs in Fig.~\ref{fig:tx_optim_main} by measuring the change in average data offload quantity relative to $\gamma$, the scaling of D2D offloading. 
As $\gamma$ increases from $0.00001$ to $0.01$, the cost of D2D data offloading becomes more expensive relative to the data processing and FedL cost terms from~\eqref{eq:obj_new}. 
Small $\gamma$ initially has little impact on average data offloading in small networks of $N=200$ until the $\gamma$ scaling term makes the D2D communication cost comparable to that of the FedL performance term in~\eqref{eq:obj_new}. Then average data offloading quantity rapidly decreases to $0$ and correspondingly the normalized FedL term increases. 

Fig.~\ref{fig:tx_optim_main} provides two additional insights. First, as the network grows in size from $N=200$ to $N=800$, the average data offload quantity per  D2D link remains quite stable, with all Fig.~\ref{fig:tx_optim_main}a)-Fig.~\ref{fig:tx_optim_main}d) demonstrating average data offload rates between $10$ to $20$ for $\gamma=0.00001$. 
That being said, larger networks do demonstrate hightened sensitivity to $\gamma$. For example, for all $S \in \{3,5,7\}$, while the average data offloaded in a network with $N=200$ remains stable for $\gamma \in [0.00001, 0.001]$, the average data offloaded decreases more and more rapidly as the network size grows to $400$, $600$, and then $800$ nodes for the same $\gamma$ range. 
The second is that as the sampled device set size increases from $S=3$ to $S=7$, the average data offloaded per D2D link increases. This is due to the existence of more D2D connections between unsampled and sampled devices, enabling more links to be active. 


\section{Conclusion and Future Work}
 
\noindent In this paper, we developed a novel methodology to solve the joint sampling and D2D offloading optimization problem for FedL. Our theoretical analysis of the offloading subproblem produced new convergence bounds for FedL, and led to a sequential convex optimization solver. We then developed a GCN-based algorithm that determines the sampling strategy by learning the relationships between the network properties, the offloading topology, the sampling set, and the FedL accuracy. Our implementations using popular datasets and real-world IoT measurements from our testbed demonstrated that our methodology obtains significant improvements in terms of datapoints processed, training speed, and resulting model accuracy compared to several other algorithms, including FedL using all devices. 
Future work may consider the integration of further realistic network characteristics such as unlabeled data~\cite{wang2023multi,wagle2022embedding} on FedL, or {\color{black} investigate the integration of realistic wireless conditions, such as fading, shadowing, and interference among others~\cite{amiri2020federated}, for D2D cooperation within large-scale networks for FedL.}

\appendices
\vspace{-1mm}
\section{Proof of Theorem~\ref{thm:error}}\label{app:main1}
\noindent Since $\mathbf{v}_k(t)= \mathbf{v}_k(t-1) - \eta \nabla F(\mathbf{v}_k(t-1)|\mathcal{D}_{\mathcal{N}}(t))$, $\mathbf{w}_{\mathcal{S}}(t)  =  \mathbf{w}_{\mathcal{S}}(t-1) \hspace{-.5mm}- \hspace{-.5mm}\eta (\nabla F(\mathbf{w}_{\mathcal{S}}(t-1)|\mathcal{D}_{\mathcal{N}}(t)) \hspace{-.5mm}+\hspace{-.5mm} \zeta(\mathbf{w}_{\mathcal{S}}(t-1))$, we get:

\vspace{-5mm}
\small
\begin{align} \label{eq:t1_p1}
&\hspace{-2mm}\big\Vert \mathbf{w}_{\mathcal{S}}(t) - \mathbf{v}_{k}(t) \big\Vert \hspace{-1mm} = \hspace{-1mm} \Big\Vert \mathbf{w}_{\mathcal{S}}(t \hspace{-0.5mm}- \hspace{-0.5mm} 1) - \mathbf{v}_k(t\hspace{-0.5mm}-\hspace{-0.5mm}1) - \eta \zeta(\mathbf{w}_{\mathcal{S}}(t\hspace{-0.5mm}-\hspace{-0.5mm}1)) \nonumber\\
&\hspace{-2mm}- \eta\nabla F(\mathbf{w}_{\mathcal{S}}(t-1)|\mathcal{D}_{\mathcal{N}}(t)) +\eta \nabla F(\mathbf{v}_k(t-1) \vert \mathcal{D}_{\mathcal{N}}(t)) \Big\Vert.\hspace{-2mm}
\end{align}
\normalsize

\noindent We simplify \eqref{eq:t1_p1} through the following steps:
\vspace{-5mm}

\small
\begin{align}\label{eq:t1_bigboy}
& \Vert \mathbf{w}_{\mathcal{S}}(t) - \mathbf{v}_{k}(t) \Vert \overset{(a)}{\leq} \Vert \mathbf{w}_{\mathcal{S}}(t-1)-\mathbf{v}_k(t-1) \Vert \nonumber \\
&  + \eta \sum_{i \in \mathcal{N}} \frac{D_i(t-1)}{D_{\mathcal{N}}(t)} \Vert\nabla F(\mathbf{w}_{\mathcal{S}}(t-1)|\mathcal{D}_i(t-1)) \nonumber \\
& - \nabla F(\mathbf{v}_k(t-1)|\mathcal{D}_i(t-1)) \Vert +\eta \Vert \zeta(\mathbf{w}_{\mathcal{S}}(t-1)) \Vert \nonumber \\
&\overset{(b)}{\leq} \Vert \mathbf{w}_{\mathcal{S}}(t-1)-\mathbf{v}_k(t-1) \Vert + \eta \Vert \zeta(\mathbf{w}_{\mathcal{S}}(t-1))\Vert  \nonumber \\ 
&+\eta\beta \sum_{j \in \mathcal{N}}\frac{D_j(t-1)}{D_{\mathcal{N}}(t) D_{\mathcal{S}}(t) } \sum_{i \in \mathcal{S}}D_i(t-1)\Vert \mathbf{w}_{\mathcal{S}}(t-1) - \mathbf{v}_k(t-1)\Vert \nonumber \\
&\overset{(c)}{\leq} \Vert \mathbf{w}_{\mathcal{S}}(t-1) - \mathbf{v}_k(t-1) \Vert + \eta \Vert \zeta(\mathbf{w}_{\mathcal{S}}(t-1))\Vert \nonumber \\ 
&+\sum_{j \in \mathcal{N}}\frac{D_j(t-1)}{D_{\mathcal{N}}(t) D_{\mathcal{S}}(t) } \sum_{i \in \mathcal{S}}D_i(t-1)\frac{\delta_i(t)}{\beta}(2^{t-1-(k-1)\tau}-1)  \nonumber \\
&\overset{(d)}{\leq} \Vert \mathbf{w}_{\mathcal{S}}(t-1)-\mathbf{v}_k(t-1) \Vert + \frac{1}{\beta} \Vert \zeta(\mathbf{w}_{\mathcal{S}}(t-1))\Vert  \nonumber \\ 
&+\sum_{j \in \mathcal{N}}\frac{D_j(t-1)}{D_{\mathcal{N}}(t)} \frac{\delta_{\mathcal{S}}(t)}{\beta}(2^{t-1-(k-1)\tau}-1), 
\end{align}
\normalsize
\vspace{-4mm}

\noindent where $(a)$ results from expanding $\nabla F(\mathbf{v}_k(t-1)|\mathcal{D}_{\mathcal{N}}(t))$ and applying the triangle inequality repeatedly, $(b)$ follows from using the $\beta$-smoothness of the loss function and the triangle inequality, $(c)$ applies Lemma 3 from \cite{wang2019adaptive}, and $(d)$ uses the expanded form of $\delta_{\mathcal{S}}(t)$ in \eqref{th1:3}. We then rearrange~\eqref{eq:t1_bigboy}:
\begin{equation}
\begin{aligned}
&\Vert \mathbf{w}_{\mathcal{S}}(t)-\mathbf{v}_k(t) \Vert - \Vert \mathbf{w}_{\mathcal{S}}(t-1)-\mathbf{v}_k(t-1) \Vert  \\ 
&\leq \frac{\Upsilon(t,k)}{\beta} +\frac{1}{\beta} \Vert \zeta(\mathbf{w}_{\mathcal{S}}(t-1)) \Vert.
\end{aligned}
\end{equation}
\noindent 
Since $\Vert \mathbf{w}_{\mathcal{S}}(t) - \mathbf{v}_{k}(t)\Vert = 0$ when re-synchronization occurs at $t = k\tau$ , ~$\forall k \in \{1,\cdots,K \}$, we express $\Vert \mathbf{w}_{\mathcal{S}}(t)-\mathbf{v}_k(t) \Vert$ as:
\small
\begin{align}
&\Vert \mathbf{w}_{\mathcal{S}}(t)- \mathbf{v}_k(t) \Vert  = \sum_{y = (k-1)\tau+1}^{t}  \Vert \mathbf{w}_{\mathcal{S}}(y) - \mathbf{v}_k(y) \Vert -  \Vert \mathbf{w}_{\mathcal{S}}(y-1)\nonumber \\ &-  \mathbf{v}_k(y-1) \Vert 
 \leq \frac{1}{\beta} \hspace{-2mm} \sum_{y = (k-1)\tau+1}^{t} \left(\Upsilon(y,k) +\Vert \zeta(\mathbf{w}_{\mathcal{S}}(y-1)) \Vert \right). \label{eq:sample}\hspace{-2mm}
\end{align}
\normalsize

\vspace{-1mm}
\section{Proof of Corollary~\ref{c1}}\label{app:c1}
\noindent We first define $\theta_k(t) \triangleq F(\mathbf{v}_k(t))-F(\mathbf{w}^*(t)) \geq \epsilon$.
Since $F$ is $L$-Lipschitz, we apply the result of Theorem 1, and Lemmas 2 and 6 from \cite{wang2019adaptive} to obtain:
\vspace{-1.2mm}
\small
\begin{align}
& \theta_{\hat{K}+1}(t)^{-1} - \theta_{1}(0)^{-1} = \left(\theta_{\hat{K}+1}(t)^{-1} - \theta_{k+1}(\hat{K}\tau)^{-1} \right) \nonumber \\
& + \left(\theta_{\hat{K}+1}(\hat{K}\tau)^{-1} - \theta_{\hat{K}}(k\tau)^{-1} \right) + \left(\theta_{\hat{K}}(\hat{K}\tau)^{-1} - \theta_{1}(0)^{-1} \right)\nonumber\\
& \geq \left((t-\hat{K}\tau)\xi \eta \left(1-\frac{\beta \eta}{2}\right)\right) + \hat{K}\tau \xi \eta \left(1- \frac{\beta \eta}{2}\right) - \frac{L}{\beta \epsilon^2} \hat{\Upsilon}(\hat{K}) \nonumber\\
& - (\hat{K}-1) \frac{L}{\beta \epsilon^2}\hat{\Upsilon}(\hat{K}) 
= t\xi \eta \left(1-\frac{\beta \eta}{2}\right) - \hat{K} \frac{L}{\beta \epsilon^2} \hat{\Upsilon}(\hat{K}), \label{eq:c1eq1}\hspace{-3mm}
\end{align}
\normalsize
\noindent where $\hat{\Upsilon}(\hat{K}) \triangleq \sum_{y=(\hat{K}-1)\tau+1}^{\hat{K}\tau} \Upsilon(\hat{K},y) + \Vert \zeta(\mathbf{w}_{\mathcal{S}}(y-1))\Vert $. Using the assumptions from Lemma~2 of~\cite{wang2019adaptive} and the fact that $t \geq \hat{K}\tau$, the RHS of~\eqref{eq:c1eq1} is strictly positive, implying that $\hat{\Upsilon} (\hat{K}) \ll 1$ as $t\xi \eta (1-\frac{\beta \eta}{2})$ is very small. Next, defining $\varrho(t) \triangleq F(\mathbf{w}_{\mathcal{S}}(t)|\mathcal{D}_{\mathcal{N}}(t)) - F(\mathbf{w}^*(t)|\mathcal{D}_{\mathcal{N}}(t)) \geq \epsilon$, we get:

\small
\begin{equation} \label{eq:cleq2}
\hspace{-0mm}
\begin{aligned}
&\varrho(t)^{-1} - \frac{1}{\theta_{\hat{K}+1}(t)}  = \frac{F(\mathbf{v}_{\hat{K}}(t)|\mathcal{D}_{\mathcal{N}}(t)) - F(\mathbf{w}_{\mathcal{S}}(t)|\mathcal{D}_{\mathcal{N}}(t))}{\theta_{\hat{K}+1}(t) \varrho(t) } \\
& \geq \frac{-L}{\beta \epsilon^2} \sum_{y=(\hat{K}-1)\tau+1}^{\hat{K}\tau} \Upsilon(\hat{K},y) + \Vert \zeta(\mathbf{w}_{\mathcal{S}}(y-1)) \Vert.
\end{aligned}
\hspace{-6mm}
\end{equation}
\normalsize

\noindent Combining \eqref{eq:c1eq1} and \eqref{eq:cleq2}, and since $\theta_1(0) > 0$, we get $\varrho(t)^{-1} \geq t \xi \eta \left(1 - \frac{\beta \eta}{2}\right) - \frac{(\hat{K}+1)L}{\beta \epsilon^2} \hat{\Upsilon}(\hat{K})$, taking the reciprocal of which leads to \eqref{eq:cl1_result}.



\vspace{-1mm}
\section{Proof of Proposition~\ref{prop:1}}\label{app:prop1}

\noindent Since $\nabla F(\mathbf{w}_{\mathcal{S}}(t)|\mathcal{D}_i(t))$ is the average of $\nabla F(\mathbf{w}_{\mathcal{S}}(t),x_d,y_d)$, $\forall (x_d,y_d) \in \mathcal{D}_i(t)$, we apply the central limit theorem to view $\nabla F(\mathbf{w}_{\mathcal{S}}(t)|\mathcal{D}_i(t))$ as $D_i(t)$ samples of $\nabla F(\mathbf{w}_{\mathcal{S}}(t),x_d,y_d)$ from a distribution with mean $\nabla F(\mathbf{w}_{\mathcal{S}}(t)|\mathcal{D}_{\mathcal{N}}(t))$. Then, \eqref{eq:lemma} can be upper bounded using the definition in~\eqref{def:e}:
\vspace{-3mm}

\small
\begin{align}
&\Vert\nabla F(\mathbf{w}_{\mathcal{S}}(t)|\mathcal{D}_i(t)) - \nabla F(\mathbf{w}_{\mathcal{S}}(t)|\mathcal{D}_{\mathcal{N}}(t)) - \zeta(\mathbf{w}_{\mathcal{S}}(t))\Vert \nonumber \\
&\leq \Vert \zeta(\mathbf{w}_{\mathcal{S}}(t)) \Vert  + \frac{\gamma}{\sqrt{D_i(t)}} \leq \Bigg\Vert \frac{-1}{D_{\mathcal{N}}(t)} \sum_{i \in \hat{\mathcal{S}}} D_i(t) \nabla F(\mathbf{w}_{\mathcal{S}}(t)|\mathcal{D}_i(t))\nonumber  \\
& + \frac{D_{\mathcal{N}}(t)-D_{\mathcal{S}}(t)}{D_{\mathcal{N}}(t) D_{\mathcal{S}}(t)} \sum_{i \in \mathcal{S}} D_i(t) \nabla F(\mathbf{w}_{\mathcal{S}}(t)|\mathcal{D}_i(t))   \Bigg\Vert + \frac{\gamma}{\sqrt{D_i(t)}}.
\hspace{-4mm}
\end{align}
\normalsize
Applying the triangle inequality on the above gives the result. 




\section{{\color{black}High Energy Costs Regime on Performance}}
\label{app:high_nrg}

{\color{black}We investigate the behavior of our methodology in a scenario with energy restrictions, and compare it relative to the balanced regime. 
In the balanced regime, the three objective function terms in $(\boldsymbol{\mathcal{P}}_D)$ are expected to have a similar degree of importance, which we can emulate by having similar order of magnitude via careful choice of $\alpha$, $\beta$, and $\gamma$. 
By contrast, if devices were disconnected from stable power, they would have energy limitations, and therefore the energy cost for data processing and transmissions is likely to be more restrictive. 
In response, a network may increase $\beta$ and $\gamma$ by an order of magnitude, thus entering the high energy cost regime. 

\begin{figure}[t]
\centering
\includegraphics[width=.82\linewidth]{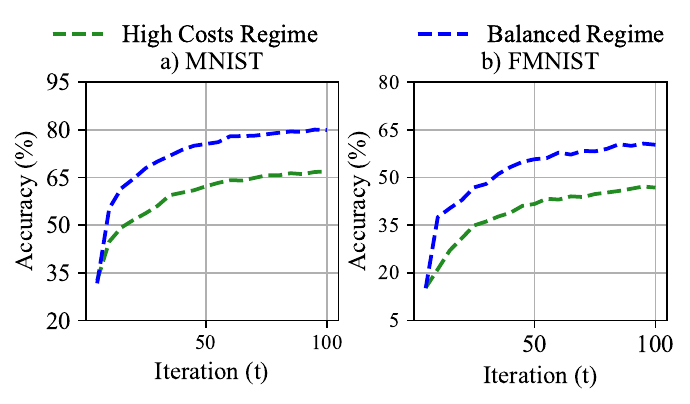}
\caption{{\color{black} Comparing the performance of our methodology under the high energy cost regime versus the standard balanced cost regime for MNIST and FMNIST for a network with $N=200$ and $S=3$.}}
\label{fig:ovr_high_nrg_comparison}
\end{figure}

\begin{table}[t]
\caption{{\color{black} Optimization values of $(\boldsymbol{\mathcal{P}}_D)$ for a network with $N=200$ and $S=3$ on MNIST and FMNIST.
We examine the values of the un-scaled terms of $(\boldsymbol{\mathcal{P}}_D)$, i.e., the averaged estimated FL loss, the data processing energy, and the data transmission energy, for both the balanced regime and the high energy costs regime.}}
{\footnotesize
\begin{tabularx}{0.48\textwidth}
{m{8em} m{3em} m{5em} m{3em} m{5em}}
\toprule[.2em]
& \multicolumn{2}{c}{\textbf{MNIST}} & \multicolumn{2}{c}{\textbf{FMNIST}} \\
\cmidrule(lr){2-3} \cmidrule(lr){4-5} 
& Balanced & High Costs & Balanced  & High Costs \\
\midrule
Estimated Loss & 155.54 & 156.05 & 155.76 & 156.05 \\ 
Data Processing Energy & 120.79 & 99.87 & 148.36 & 142.03 \\
Data Transmission Energy & 188.84 & 0 & 128.47 & 0 \\ 
\bottomrule
\end{tabularx}}
\label{tab:3regimes}
\end{table}

We examine these two regimes in Fig.~\ref{fig:ovr_high_nrg_comparison} and Table~\ref{tab:3regimes}. 
For the balanced regime, we employ $\alpha = 100$, $\beta = 0.001$, and $\gamma = 0.01$ for MNIST or $\gamma = 0.006$ for FMNIST as in Sec.~\ref{ss:sim_baseline}, while, for the high energy costs regime, we will use $\alpha = 100$, $\beta = 0.01$, and $\gamma = 0.1$ for MNIST or $\gamma = 0.06$ for FMNIST.

As the high energy cost regime places greater emphasis on reducing the data processing and D2D data offloading, we see a reduction in the data processing and transmission energies in Table~\ref{tab:3regimes} for the high energy cost regime relative to the balanced regime. 
This reduced data processing and transmission leads to a corresponding increase in the estimated loss in Table~\ref{tab:3regimes}, which is reflected by the decreased classification accuracies shown in Fig.~\ref{fig:ovr_high_nrg_comparison} for the high energy cost regime relative to the balanced regime.}

\section{{\color{black}Impact of Time-Varying Links}}
\label{app:tlinks}

\begin{table}[t]
\caption{{\color{black}Examining the impact of time-varying inactive links (i.e., adjacency matrix changes over time) on the performance of the proposed sampling with D2D data offloading methodology for a network with $N=200$ and $S=3$.}}
{\footnotesize
\begin{tabularx}{0.48\textwidth}
{m{5em} m{4em} m{5em} m{4em} m{5em}}
\toprule[.2em]
& \multicolumn{2}{c}{\textbf{MNIST}} & \multicolumn{2}{c}{\textbf{FMNIST}} \\
\cmidrule(lr){2-3} \cmidrule(lr){4-5}
Link Failure Rate & Final Acc (\%) & Links Available & Final Acc (\%) & Links Available \\
\midrule
0\%  & 79.93 & 7694 & 60.32 & 7699 \\ 
25\%  & 77.96 & 5706 & 59.19 & 5714 \\ 
50\%  & 73.53 & 3788 & 57.97 & 3792 \\ 
75\%  & 73.26 & 1891 & 53.16 & 1893 \\ 
100\%  & 66.67 & 0 & 47.94 & 0 \\ 
\bottomrule
\end{tabularx}}
\label{tab:tlinks}
\end{table}

{\color{black}We examine the ability of our methodology to adapt to networks with time varying links (i.e., changing adjacency matrices) in Table~\ref{tab:tlinks}. 
This experiment randomly deactivates links within the adjacency matrix $\textbf{A}$ such that at any given iteration $t$, each $i,j$-th link will be deactivated, i.e., $\textbf{A}_{i,j}(t) = 0$, with probability $0\%, 25\%, 50\%, 75\%,$ or $100\%$. 
From Table~\ref{tab:tlinks}, higher link failure rate does reduce the classification performance for the proposed methodology, which makes sense as there are fewer links available and thus fewer D2D data offloading opportunities. 
Nonetheless, the resulting accuracies in the scenarios involving time-varying links are quite close to the case with fixed links. 
For both MNIST and FMNIST, the gap between $0\%$ link failure rate and $75\%$ link failure rate is within $7\%$, approximately. 
That being said, the case with total link failures (i.e., $100\%$ link failure rate) has over $6\%$ reduced final classification accuracy from the case with $75\%$ link failure rate. 
Thus, we can see that D2D data offloading has substantial impact on the resulting system performance.}


\section{{\color{black}Examining Integration of FedDrop~\cite{wen2022federated}}}
\label{app:feddrop}

\begin{figure}[t]
\centering
\includegraphics[width=.86\linewidth]{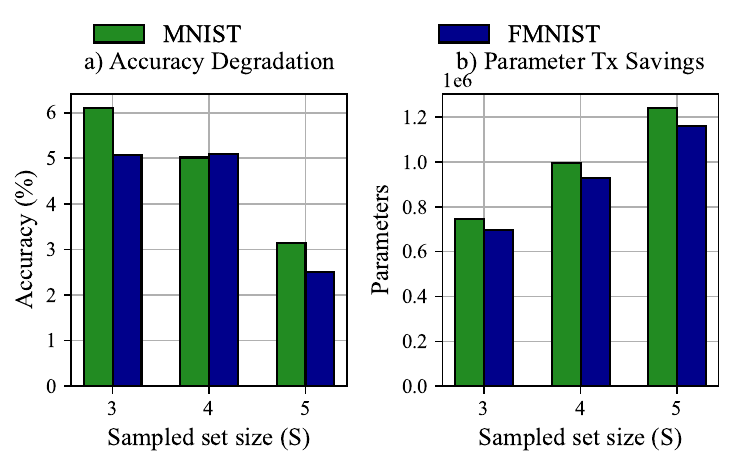}
\caption{{\color{black}Examining the effectiveness of FedDrop~\cite{wen2022federated} integrated with our proposed device sampling methodology on both MNIST and FMNIST datasets for a network with $N=100$. 
FedDrop has yields a small decrease in the final classification accuracies as well as a significant decrease in the number of parameters that are transmitted from the sampled devices to the server.}} 
\label{fig:feddrop_ovr}
\end{figure}

{\color{black}In the following, we examine the impact of federated dropout (FedDrop from~\cite{wen2022federated}), a methodology designed to dropout model parameters from the fully connected layers of ML models in federated settings, on our device sampling methodology. 
We use a dropout rate of $75\%$ for MNIST and $70\%$ for FMNIST on the fully connected layers of our CNNs. 
As our experiments rely on CNNs with $21840$ total parameters of which $16560$ parameters are for the fully connected layers, these dropout settings yield device-to-server ML model transmission savings of roughly $43\%$ and $46\%$ for MNIST and FMNIST, respectively. 
The classification performance results in Fig.~\ref{fig:feddrop_ovr} show that FedDrop integrated with our methodology yields a substantial decrease in total ML model parameters transmitted across all global aggregations. 
Moreover, these substantial resource savings only induce a minor decrease, under roughly $6\%$, in the final classification accuracies. 
Thus, in federated scenarios with device-to-server communication limitations, our methodology combined with FedDrop offers significant communication overhead resource savings.}


\bibliographystyle{IEEEtran}
\bibliography{References}

\vspace{-10mm}
\begin{IEEEbiography}[{\includegraphics[width=1in,height=1.25in,clip,keepaspectratio]{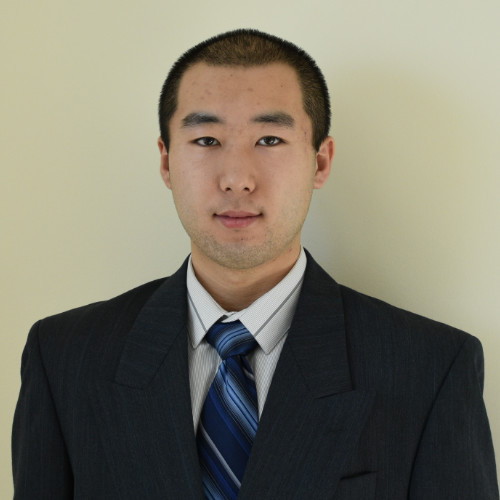}}]
{(Henry) Su Wang} received the B.S. (with distinction) and Ph.D. degrees in Electrical and Computer Engineering from Purdue University, West Lafayette in 2018 and 2023, respectively. 
He is currently a Postdoctoral Research Associate in the Department of Electrical and Computer Engineering at Princeton University. 
His research interests lie at the intersection of machine intelligence and networking. 
\end{IEEEbiography}

\vspace{-10mm}
\begin{IEEEbiography}[{\includegraphics[width=1in,height=1.25in,clip,keepaspectratio]{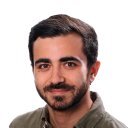}}]
{Roberto Morabito} (Member, IEEE) is an Assistant Professor in the Communication Systems Department at EURECOM, France.  His work intersects IoT, Edge Computing, and Distributed AI, focusing on trade-offs in AI service provisioning and orchestration under computing and networking resource constraints. Dr. Morabito earned his PhD in Networking Technology from Aalto University, Finland, in May 2019. From June 2019 to March 2021, he served as a Postdoctoral Researcher at the EDGE LAB, School of Electrical and Computer Engineering, Princeton University, USA.
\end{IEEEbiography}

\vspace{-10mm}
\begin{IEEEbiography}[{\includegraphics[width=1in,height=1.25in,clip,keepaspectratio]{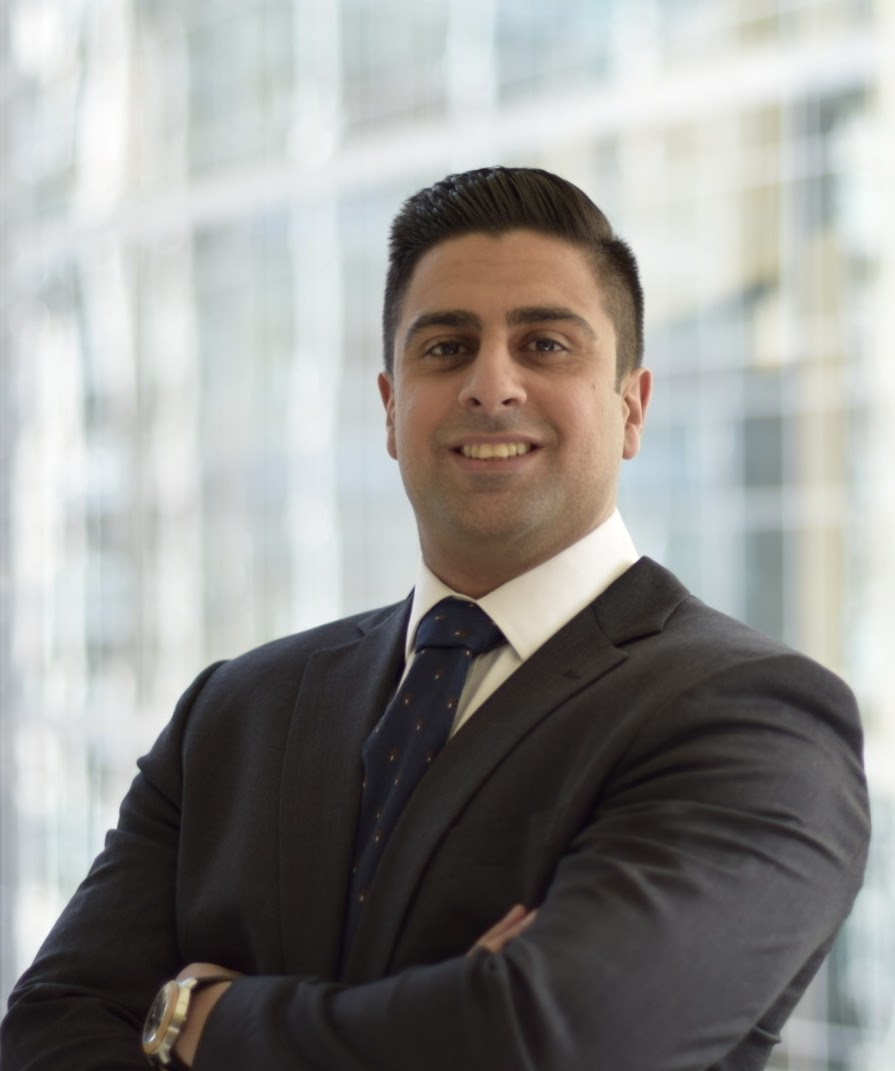}}]
{Seyyedali Hosseinalipour} (Member, IEEE) received
the B.S. degree in electrical engineering from Amirkabir University of Technology, Tehran, Iran, in 2015. He then received the M.S. and Ph.D. degrees in electrical engineering from North Carolina State University, NC, USA, in 2017 and 2020, respectively. He is currently an assistant professor at the Department of Electrical Engineering at the University at Buffalo (SUNY). His research interests include the analysis of modern wireless networks, synergies between machine learning methods and fog computing systems, distributed machine learning, and network optimization.
\end{IEEEbiography}

\vspace{-10mm}
\begin{IEEEbiography}[{\includegraphics[width=1in,height=1.25in,clip,keepaspectratio]{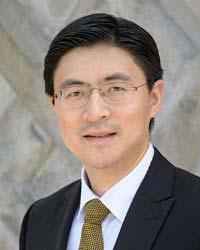}}]
{Mung Chiang} (F'12) is the President and Roscoe H. George Distinguished Professor of Electrical and Computer Engineering at Purdue University, and previously Arthur LeGrand Doty Professor at Princeton University and founded the Princeton Edge Lab. He received the 2013 Alan T. Waterman Award, and is a member of the American Academy of Arts and Sciences, National Academy of Inventors, and Royal Swedish Academy of Engineering Sciences.
\end{IEEEbiography}

\vspace{-10mm}
\begin{IEEEbiography}[{\includegraphics[width=1in,height=1.25in,clip,keepaspectratio]{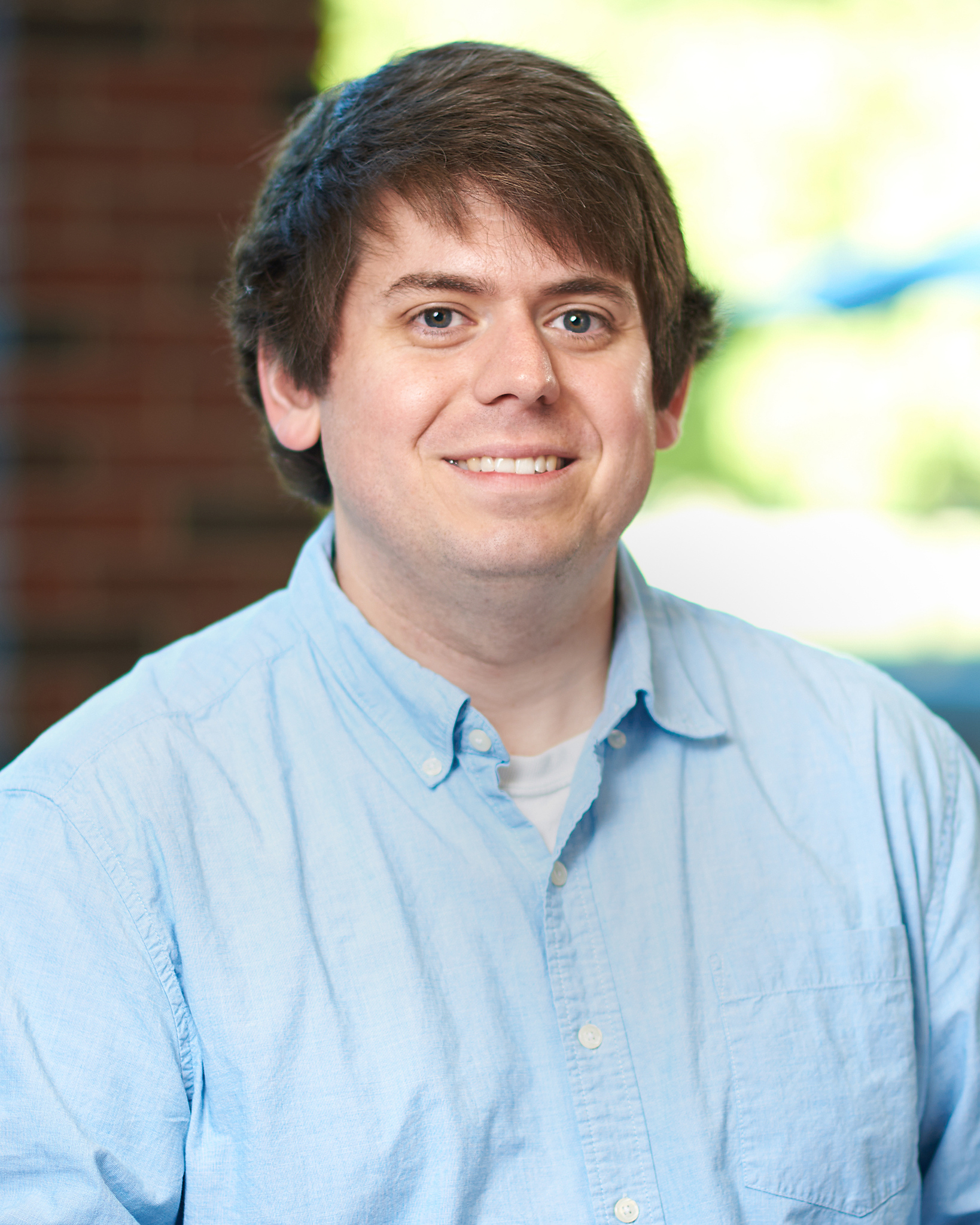}}]
{Christopher G. Brinton} (S’08, M’16, SM’20) is the Elmore Rising Star Assistant Professor of Electrical and Computer Engineering (ECE) at Purdue University. His research interest is at the intersection of networking, communications, and machine learning, specifically in fog/edge network intelligence, distributed machine learning, and AI/ML-inspired wireless network optimization. Dr. Brinton received the PhD and MS Degrees from Princeton in 2016 and 2013, respectively, both in Electrical Engineering.
\end{IEEEbiography}

\end{document}